\newcommand{\PAPER}[1]{}
\algnewcommand{\Inputs}[1]{%
  \State \textbf{Inputs:}
  \Statex \hspace*{\algorithmicindent}\parbox[t]{.8\linewidth}{\raggedright #1}
}
\algnewcommand{\Initialize}[1]{%
  \State \textbf{Initialize:}
  \Statex \hspace*{\algorithmicindent}\parbox[t]{.8\linewidth}{\raggedright #1}
}
\algnewcommand{\TurnOne}[1]{%
  \State \textbf{Timestep 1:}
  \Statex \hspace*{\algorithmicindent}\parbox[t]{.8\linewidth}{\raggedright #1}
}
\newsavebox{\@brx}
\newcommand{\llangle}[1][]{\savebox{\@brx}{\(\m@th{#1\langle}\)}%
  \mathopen{\copy\@brx\mkern2mu\kern-0.9\wd\@brx\usebox{\@brx}}}
\newcommand{\rrangle}[1][]{\savebox{\@brx}{\(\m@th{#1\rangle}\)}%
  \mathclose{\copy\@brx\mkern2mu\kern-0.9\wd\@brx\usebox{\@brx}}}
\newtheorem{fact}{Fact}
\newtheorem{theorem}[fact]{Theorem}
\newtheorem{lemma}[fact]{Lemma}
\newtheorem{conjecture}[fact]{Conjecture}
\newcommand{\ignore}[1]{}
\newcommand{\Cdense}{C_{dense}}
\newcommand{\Alphaave}{\alpha_{ave}}
\newcommand{\tO}{\widetilde{O}}
\newcommand{\polylog}{\mathop{\mathrm{polylog}}}
\newcommand{\eps}{\varepsilon}
\DeclareRobustCommand\onedot{\futurelet\@let@token\@onedot}
\def\@onedot{\ifx\@let@token.\else.\null\fi\xspace}
\title{Improved Algorithms for Integer Complexity}
\author[1]{Qizheng He\thanks{qizheng6@illinois.edu.}}
\affil[1]{Department of Computer Science, University of Illinois at Urbana-Champaign}
\begin{document}
\date{}
\maketitle
\thispagestyle{empty}

\begin{abstract}
The integer complexity $f(n)$ of a positive integer $n$ is defined as the minimum number of 1's needed to represent $n$, using additions, multiplications and parentheses. We present two simple and faster algorithms for computing the integer complexity:

1) A near-optimal $O(N\polylog N)$-time algorithm for computing the integer complexity of all $n\leq N$, improving the previous $O(N^{1.223})$ one [Cordwell et al., 2017].

2) The first sublinear-time algorithm for computing the integer complexity of a single $n$, with running time $O(n^{0.6154})$. The previous algorithms for computing a single $f(n)$ require computing all $f(1),\dots,f(n)$.
\end{abstract}


\section{Introduction}
The \emph{integer complexity} $f(n)$ of a positive integer $n$ is a simple-looking problem in number theory with a long history. It is defined as the minimum number of 1's needed to represent $n$, using basic arithmetic expressions that only include 1, additions, multiplications and parentheses. For example, $f(6)=5$, since $6=(1+1)\cdot (1+1+1)$. Due to the simplicity of its description, this problem is a popular exercise in undergraduate algorithm courses.

The problem of integer complexity was first implicitly posed by Mahler and Popken in 1953~\cite{mahler1953maximum}, and then explicitly stated by Selfridge~\cite{guy1986some}. It was later popularized by Guy, who included this problem in his book ``Unsolved Problems in Number Theory''~\cite{guy2004unsolved}. Despite the simplicity of its problem definition, a considerable amount of papers has studied this problem from both number-theoretical and algorithmic perspectives (see OEIS A005245~\cite{A005245} for a brief summary), yet the problem has not been satisfactorily solved.

\paragraph{Number-theoretical results.} On the number-theory side, mathematicians mainly work towards proving tighter asymptotic upper and lower bounds on the integer complexity.

For the upper bound, let $\alpha$ denote $\sup_{n>1}\frac{f(n)}{\log_3 n}$, so that $f(n)\leq \alpha\log_3 n$ holds for all $n>1$. Guy~\cite{guy1986some} noted a trivial result $f(n)\leq 3\log_2 n$ by writing $n$ in base 2, so $\alpha\leq 4.755$.
Surprisingly, this simple bound had not been improved for a long time, until Zelinsky proved a better upper bound $\alpha\leq 4.676$~\cite{Zelinsky09post}, and then further improved to $\alpha \leq 41\log_{55296} 3\leq 4.125$~\cite{zelinsky2022upper}.
\begin{lemma}[\cite{zelinsky2022upper}]\label{lemma:upper_bound}
$f(n)\leq \alpha\log_3 n$ for all positive integers $n>1$, where $\alpha\leq 4.125$.
\end{lemma}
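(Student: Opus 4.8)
Write $c := 41\log_{55296} 3$ and $B := 55296 = 2^{11}\cdot 3^{3}$; since $c\log_3 n = 41\log_B n$, the assertion $f(n)\le c\log_3 n$ is the same as $f(n)\le 41\log_B n$, and $c\le 4.125$ is a one-line numerical check. The plan is to prove $f(n)\le 41\log_B n$ for all integers $n>1$ by strong induction on $n$. The only properties of $f$ I would use are immediate from the definition: $f$ is subadditive, $f(a+b)\le f(a)+f(b)$, and submultiplicative, $f(ab)\le f(a)+f(b)$, since an optimal expression for $a$ and one for $b$ can be joined by a single ``$+$'' or ``$\times$''. Hence $f(2^a3^b)\le 2a+3b$, and in particular $f(B)\le 22+9 = 31$. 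These facts already give, via binary Horner, Guy's baseline $f(n)\le 3\log_2 n + O(1)$, i.e.\ the trivial bound $\alpha\le 4.755$; the point is to improve the worst case.

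For the inductive step I would, given $n$, exhibit a decomposition of one of the shapes $n = u\cdot m + v$ with $m<n$ and $f(u)+f(v)$ controlled, or $n = n_1 + n_2$ with $n_1,n_2<n$, then apply the hypothesis to the smaller quantities and check that the budget inequality $f(n)\le 41\log_B n$ closes. The naive Horner choice $u=B,\ m=\lfloor n/B\rfloor,\ v = n\bmod B$ does \emph{not} close: it would require $f(v)\le 10$ for every $v<B$, which is hopelessly false, and indeed \emph{no} fixed base works, because (using the elementary lower bound $f(d)\ge 3\log_3 d$) a typical base-$B$ digit has complexity of order $3\log_3 B$ while the per-position budget remaining after paying $f(B)$ for the multiplication is only of order $(c-3)\log_3 B\approx 1.125\log_3 B$. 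The improvement, following Zelinsky~\cite{zelinsky2022upper}, comes from choosing $u$ and $v$ \emph{adaptively} from the low-order bits of $n$: the genuinely expensive local configurations for plain Horner are long runs of $1$'s, and such a run $2^k-1$ is rebuilt cheaply by passing to a Horner scheme in a much larger base together with repeated squaring, exploiting identities such as $1+x+\cdots+x^{k-1} = \frac{x^k-1}{x-1}$ and $x^{2k}-1 = (x^k-1)(x^k+1)$, while ordinary stretches are still processed bit by bit. Amortizing the cost of this scheme over all possible local patterns reduces the theorem to finitely many inequalities about the complexities of integers below an explicit threshold $N_0$ tied to the modulus $2^{11}3^3$; it is the optimization of this finite system that produces the constants $41$ and $55296$, hence $\alpha\le 4.125$.

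The rest is routine: the base case ``$f(n)\le c\log_3 n$ for $1 < n\le N_0$'' and the finitely many residual inequalities I would discharge by directly tabulating $f(1),\dots,f(N_0)$ from $f(1)=1$ and $f(n)=\min\{f(a)+f(b) : a+b=n \text{ or } ab=n,\ a,b\ge 1\}$, after which the induction propagates the bound to all $n$. The hard part is squarely the middle step: isolating a single adaptive decomposition rule that provably stays within budget for \emph{every} input pattern, and keeping the resulting finite verification small enough to run in practice. Pushing the constant below $4.125$ appears to require either a still larger modulus --- a search that quickly becomes infeasible --- or a genuinely new understanding of which ``building-block'' integers are most efficient, which is why $4.125$ is the current record.
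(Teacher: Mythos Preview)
The paper does not prove this lemma at all: it is quoted verbatim from Zelinsky~\cite{zelinsky2022upper} and used as a black box, so there is no ``paper's own proof'' to compare your attempt against.

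As for your sketch on its own terms: the scaffolding is right --- strong induction, subadditivity/submultiplicativity, the computation $f(B)\le 31$, and the observation that plain base-$B$ Horner fails because it would force $f(v)\le 10$ for every digit $v<B$ --- but the load-bearing middle step is missing. You never actually state an adaptive decomposition rule; you gesture at ``long runs of $1$'s'' and repunit identities, but you do not specify which $(u,v)$ to pick for a given residue class of $n$, nor do you verify that the resulting per-step cost stays within the budget $41$. The sentence ``it is the optimization of this finite system that produces the constants $41$ and $55296$'' is a description of what Zelinsky did, not something you have done. Likewise, ``the finitely many residual inequalities I would discharge by directly tabulating'' defers the only genuinely nontrivial computation. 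What you have written is an accurate outline of the \emph{shape} of Zelinsky's argument, but as a proof it is incomplete precisely at the point you yourself flag as ``the hard part.'' For the purposes of this paper that is fine --- the lemma is meant to be cited, not reproved --- but your write-up should not claim to be a proof.
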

People also studied upper bounds on the integer complexity for almost all numbers except for a set of density zero~\cite{guy1986some,de2014algorithms,steinerberger2014short}, and the current best bound is $3.620\log_3 n$ by Cordwell et al.~\cite{cordwell2017algorithms}. Shriver~\cite{shriver2015application} proved an upper bound $3.52\log_3 n$ for numbers with logarithmic density one.

For the lower bound, a simple result $f(n)\geq 3\log_3 n$ is attributed to Selfridge~\cite{guy1986some}, which is tight when $n$ is a power of $3$. Altman and Zelinsky studied integers whose complexity is close to this lower bound~\cite{altman2012numbers}.

For more related number-theoretic works, see \cite{altman2012numbers,altman2014integer,altman2016IntegerBounded,10.2140/moscow.2019.8.193,complexity2021}.

\paragraph{Algorithmic results.} On the algorithms side, for computing the integer complexity $f(n)$ of a single target $n$, it is a standard homework exercise in undergraduate courses to design a naive dynamic programming (DP) algorithm that runs in $O(n^2)$ time, using the recurrence
$$f(n)=\min\left\{\min_{1\leq i\leq n-1}\left(f(i)+f(n-i)\right),\min_{m:m|n}\left(f(m)+f(n/m)\right)\right\}.$$
Namely, if the last evaluated operation for getting $n$ has the form $n=i+(n-i)$ using addition, then $f(n)=f(i)+f(n-i)$. Otherwise if it has the form $n=m\cdot (n/m)$ using multiplication, then $f(n)=f(m)+f(n/m)$.

A key observation by Vivek and Shankar~\cite{venkatesh2008integer} is, for the addition case (which is the bottleneck of the DP algorithm), it suffices to only consider the indices $1\leq i\leq L(n)\triangleq n^{\ell}$ for some constant exponent $\ell$: if $i$ is too large (w.l.o.g.\ assume $i\leq n-i$), then the total integer complexity of the two parts $f(i)+f(n-i)\geq 3\log_3 i+3\log_3 (n-i)$ will be larger than the upper bound $\alpha\log_3 n$ on $f(n)$, thus cannot be optimal. By solving the equation $3\ell\log_3 n+3\log_3 (n-n^\ell)=\alpha\log_3 n$, and using the current best upper bound on $\alpha$ (Lemma~\ref{lemma:upper_bound}), we get $\ell \approx\alpha/3-1\leq 0.3749$. For the multiplication case, since the total number of factors of $1,\dots,n$ is only $O(n\log n)$, one can handle this case more efficiently, in only $O(n\log n)$ time. Combing this observation with the naive DP, they obtained an algorithm with running time $O(n^{\log_2 3})\approx O(n^{1.585})$ \cite{venkatesh2008integer}, using the trivial value $4.755$ for $\alpha$ available at that time. Two concurrent works~\cite{Fuller08code,de2014question} used similar ideas and were superior to~\cite{venkatesh2008integer}, but they did not analyze the running time of their algorithms.

These algorithms were further improved by J.\ Arias de Reyna and J.\ Van de Lune~\cite{de2014algorithms}, to running time $O(n^{1.231})$. The key idea is it suffices to use the better upper bounds available for integers with density $1$, and then handle the few exceptions, instead of using the upper bound $\alpha\log_3 n$ that holds for all $n$. Further improving on their idea, the current best algorithm by Cordwell et al.~\cite{cordwell2017algorithms} runs in $O(n^{1.223})$ time.

We note that in order to compute $f(N)$ for a \emph{single} target $N$, all these algorithms also simultaneously computed $f(n)$ for \emph{all} targets $n\leq N$. For the all-targets version, there is still a significant gap between their superlinear running time and the trivial $\Omega(N)$ lower bound. In this paper, we close this gap by providing a near-linear time algorithm. On the other hand, we will design faster sublinear-time algorithms that specifically work for a single target.

We remark that the two viewpoints, number-theoretic and algorithmic, are closely related. The process of designing faster algorithms benefit from the number-theoretic bounds; on the other hand, performing experiments using faster algorithms help for proposing, proving and disproving number-theoretic conjectures. For example, Iraids et al.~\cite{iraids2012integer} computed $f(n)$ up to $n\leq 10^{12}$, and presented their experimental observations. For more related algorithmic and computational works, see~\cite{vcercnenoks2015integer,altman2016integer}.


\paragraph{Overview of our new algorithms.}
We mainly study the integer complexity problem from the algorithmic side, and present simple and faster algorithms for both the all-targets version and the single-target version. In Sec.~\ref{sec:alg for all N}, we present a near-optimal $\tO(n)$\footnote{The $\tO$ notation hides polylogarithmic factors.}-time algorithm for computing $f(n)$ for all $n\leq N$, where the main idea is to use online $(\min,+)$-convolution with bounded entries to handle the addition case. This approach is significantly different from the previous works, which all use number-theoretic bounds to limit the range in addition, but then handle the additions in a brute-force way. In Sec.~\ref{sec:alg for single N}, we present the first sublinear-time algorithm for computing a single $f(n)$, where the key idea is to use a special recursion based on both the number-theoretic bounds for addition and observations on the special structure for multiplication, so that computing the current value only depends on a few previous values.

\section{Algorithm for All Targets}\label{sec:alg for all N}

\paragraph{Preliminaries.} We first introduce some preliminary definitions and tools that will be useful later for our algorithms.
\begin{lemma}[(classical) convolution]
The convolution of two arrays $a[0,\dots,n_1-1]$ and $b[0,\dots,n_2-1]$ is an array $c[0,\dots,n_1+n_2-2]$, where $c[i]=\sum_{j=0}^i a[j]\cdot b[i-j]$ (out-of-boundary elements are treated as 0).

It is known that convolution can be computed in $O((n_1+n_2)\log(n_1+n_2))$ time, by FFT.
\end{lemma}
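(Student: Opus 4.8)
The plan is to identify each array with a polynomial and reduce convolution to polynomial multiplication, which FFT computes quickly. I would associate to $a$ the polynomial $A(x)=\sum_{j=0}^{n_1-1}a[j]\,x^j$ and to $b$ the polynomial $B(x)=\sum_{k=0}^{n_2-1}b[k]\,x^k$. Expanding the product gives $A(x)\,B(x)=\sum_i\big(\sum_j a[j]\,b[i-j]\big)x^i$, so the coefficient of $x^i$ in $A(x)B(x)$ is exactly $c[i]$ (with out-of-range $a[j]$ and $b[i-j]$ contributing nothing, matching the ``treated as $0$'' convention). Hence it suffices to compute the coefficient vector of the product polynomial, which has degree at most $n_1+n_2-2$.

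Next I would invoke the standard evaluation--interpolation strategy for multiplying two polynomials of degree less than $n$, where I zero-pad $A$ and $B$ up to a common degree bound and take $n$ to be the smallest power of two that is at least $n_1+n_2-1$, so that $n=O(n_1+n_2)$. First, evaluate $A$ and $B$ at the $n$-th roots of unity $\omega^0,\dots,\omega^{n-1}$ with $\omega=e^{2\pi i/n}$; this discrete Fourier transform is computed in $O(n\log n)$ time by the Cooley--Tukey recursion, which splits a size-$n$ DFT into two size-$n/2$ DFTs on the even- and odd-indexed coefficients and recombines them with $O(n)$ further arithmetic operations, yielding $T(n)=2T(n/2)+O(n)=O(n\log n)$. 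Second, form the pointwise products $A(\omega^\ell)\,B(\omega^\ell)$ for $\ell=0,\dots,n-1$ using $O(n)$ multiplications; these are precisely the values of $C=A\cdot B$ at the $n$-th roots of unity, since $\deg C<n$. Third, recover the coefficients of $C$ by an inverse DFT, which is itself a DFT evaluated at $\omega^{-1}$ followed by a scaling by $1/n$, hence again $O(n\log n)$ time. Reading off $c[0],\dots,c[n_1+n_2-2]$ from the coefficient vector of $C$ finishes the computation, for a total running time of $O(n\log n)=O((n_1+n_2)\log(n_1+n_2))$.

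There is no genuine obstacle here, as this is the textbook FFT-based convolution; the only points deserving a line of care are the bookkeeping for padding both arrays to the next power of two (noting the size blow-up is only a constant factor, so the asymptotic bound is unaffected) and, if one insists on exact integer outputs rather than working with complex arithmetic and rounding, the option of replacing the complex FFT by a number-theoretic transform over a suitable finite field, which has the same $O(n\log n)$ cost. I would relegate these remarks to a parenthetical and otherwise treat the lemma as the classical fact that it is.
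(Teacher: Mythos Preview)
Your proof is correct and is exactly the standard textbook argument; the paper itself does not prove this lemma at all, stating it only as a well-known fact (``It is known that convolution can be computed \ldots\ by FFT''). So your write-up is more detailed than what the paper provides, but there is nothing to compare approaches against.
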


\begin{lemma}[online convolution]\label{lemma:online_conv}
The online convolution of two arrays $a[0,\dots,n-1]$ and $b[0,\dots,n-1]$ is the same as the convolution of $a$ and $b$, but $a[i+1]$ and $b[i+1]$ are given in an online way, determined only after knowing the value of $c[i]=\sum_{j=0}^i a[j]\cdot b[i-j]$. Initially only $a[0]$ and $b[0]$ are known.

Hoeven~\cite{Hoeven02a,Hoeven07c} proved that online convolution can be computed in $\tO(n)$ time, by recursively performing convolutions in a divide and conquer way.
\end{lemma}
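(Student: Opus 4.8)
The plan is to realize the online convolution $c=a*b$ as a disjoint union of ordinary (offline) sub-convolutions $a|_A*b|_B$, where $A,B$ range over dyadic blocks of indices, and to compute each such sub-convolution by a single FFT performed at the first time step at which both blocks $A$ and $B$ have been fully revealed; the whole argument then reduces to checking that, at that time step, every entry of $c$ into which this FFT writes is still ``in the future'' (not yet output), so the online constraint is respected. As a first step I would peel off the products $a[i]b[j]$ with $\min(i,j)\le c_0$ for a small absolute constant $c_0$: each time $a[k],b[k]$ are revealed, directly add $a[k]b[j]$ to $c[k+j]$ and $a[i]b[k]$ to $c[i+k]$ for the $O(1)$ many indices $i,j\le c_0$. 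This takes $O(1)$ time per step, hence $O(n)$ overall, and it is precisely the portion of the computation that cannot be batched at all --- $a[i]b[0]$, for instance, is needed in $c[i]$ at the very step at which $a[i]$ first becomes known.

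For the remaining products, with $\min(i,j)>c_0$, I would proceed scale by scale. For $t\ge 1$ set $B_t=\{k:2^{t-1}<k\le 2^{t}\}$ and, for $q\ge 1$, $A^t_q=\{k:q2^{t-1}<k\le(q+1)2^{t-1}\}$, each of length $2^{t-1}$, so that $A^t_1=B_t$. A short case check shows that every product $a[i]b[j]$ with $\min(i,j)>c_0$ lies in exactly one of three families: $a|_{A^t_q}*b|_{B_t}$ with $q\ge 2$ and $i$ the larger index; the mirror family $b|_{A^t_q}*a|_{B_t}$ with $q\ge 2$ and $j$ the larger index; or the diagonal $a|_{B_t}*b|_{B_t}$, which absorbs all pairs with both indices in $B_t$ --- here $t$ is the scale of the smaller index and $q$ the block index of the larger one. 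Each of these FFTs is launched at the time step at which its last block finishes being revealed; because $q\ge 2$ forces that block to end strictly after $B_t$ does, both operands are on hand, and because the smallest output index of the FFT is $\min A^t_q+\min B_t=(q+1)2^{t-1}+2$ (respectively $2^{t}+2$ for the diagonal), which strictly exceeds the launch step, the FFT only modifies entries of $c$ that have not yet been output. Correctness is then just the bookkeeping: every product is covered once, and when its FFT runs its operands are ready and its targets are still pending.

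For the running time, scale $t$ contributes $O(n/2^{t-1})$ FFTs on arrays of length $O(2^{t-1})$, i.e.\ $O(n\log n)$ work in all, plus one further length-$O(2^{t-1})$ FFT for the diagonal; summing over the $O(\log n)$ relevant scales gives $O(n\log^{2}n)=\tO(n)$. The step I expect to be the real obstacle is exactly this scheduling near the origin: a dyadic block of $a$ lying close to the start only finishes being revealed at a time when some of the entries of $c$ it ought to feed are already due for output, so a naive middle-of-the-array recursion is not quite legal there; peeling off the $\min(i,j)\le c_0$ part --- equivalently, restricting beyond the diagonal to blocks $A^t_q$ with $q\ge 2$ --- is what sidesteps this, and the thing to verify carefully is that this peeling is both necessary (the low-$\min$ products genuinely cannot wait) and sufficient (everything else gains enough slack to be batched). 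Van der Hoeven's construction instead resolves the boundary by a more intricate recursion and even attains $O(n\log n\,2^{O(\log^{*}n)})$, but the crude $\tO(n)$ bound above is all we need here.
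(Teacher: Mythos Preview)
Your proposal is correct and follows exactly the divide-and-conquer strategy the paper attributes to Hoeven: decompose $c=a*b$ into dyadic sub-convolutions, each computable by one offline FFT at the moment its operands are fully revealed but before any of its outputs are due, giving $O(n\log^2 n)=\tO(n)$ total work. The paper does not supply its own proof of this lemma---it just cites Hoeven and later illustrates the same idea in the top-down recursive form (Algorithm~1: recurse on the left half, cross-convolve, recurse on the right half) for the specific self-$(\min,+)$-convolution of $f$; your explicit bottom-up block scheduling (with the $\min(i,j)\le c_0$ peel-off handling the boundary that the top-down recursion absorbs into its base case) is an equivalent unrolling of that recursion and yields the same bound.
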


\begin{lemma}[$(\min,+)$-convolution with bounded entries]\label{lemma:min_plus_conv}
The $(\min,+)$-convolution of two arrays $a[0,\dots,n_1-1]$ and $b[0,\dots,n_2-1]$ is an array $c[0,\dots,n_1+n_2-2]$, where $c[i]=\min_{j=0}^i (a[j]+b[i-j])$ (out-of-boundary elements are treated as $\infty$).

It is known that if all entries in $a$ and $b$ are non-negative integers bounded by $u$, then the $(\min,+)$-convolution of $a$ and $b$ can be computed in $\tO(nu)$ time.
\end{lemma}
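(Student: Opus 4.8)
The plan is to reduce bounded-entry $(\min,+)$-convolution to a single ordinary (FFT) convolution via a polynomial embedding that stores the value of each array entry in the exponent of a formal variable. First I would introduce an auxiliary variable and form the bivariate polynomials $A(x,y)=\sum_{j=0}^{n_1-1} x^{a[j]} y^{j}$ and $B(x,y)=\sum_{k=0}^{n_2-1} x^{b[k]} y^{k}$. In the product $A\cdot B$, the coefficient of $y^{i}$ is $\sum_{j} x^{\,a[j]+b[i-j]}$, so the exponent of its lowest-degree term with nonzero coefficient is exactly $\min_{j}(a[j]+b[i-j])=c[i]$ (and there is no term at all precisely when $c[i]=\infty$). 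Thus reading off all $c[i]$ amounts to inspecting the coefficients of this product.

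To turn this into a univariate problem, I would substitute $y=x^{2u+1}$, obtaining $\widehat A(x)=\sum_{j} x^{\,a[j]+(2u+1)j}$ and $\widehat B(x)=\sum_{k} x^{\,b[k]+(2u+1)k}$. Since $a[j],b[k]\in\{0,1,\dots,u\}$, every term arising from a pair with $j+k=i$ has $x$-exponent in the window $W_i=[(2u+1)i,\ (2u+1)i+2u]$, and consecutive windows $W_i$ and $W_{i+1}$ are disjoint (this is why the stride is $2u+1$ rather than $2u$); hence the substitution loses no information and the coefficient of $x^e$ in $\widehat A\cdot\widehat B$, for $e\in W_i$, equals the coefficient of $x^{e-(2u+1)i}$ in the coefficient of $y^i$ in $A\cdot B$. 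I then compute $\widehat C=\widehat A\cdot\widehat B$ by FFT: both polynomials have degree $O((n_1+n_2)u)$ and integer coefficients at most $\min(n_1,n_2)$, so the multiplication runs in $O((n_1+n_2)u\log((n_1+n_2)u))=\tO(nu)$ time. Finally, for each $i$ I scan the length-$(2u+1)$ window $W_i$ of the coefficient array of $\widehat C$ and set $c[i]$ to the smallest offset $e-(2u+1)i$ with positive coefficient (declaring $c[i]=\infty$ if the whole window is zero); summed over all $i$ this is $O(nu)$ additional work.

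This reduction is essentially folklore, and the only points needing care are bookkeeping rather than genuine obstacles: verifying that the windows $W_i$ tile the exponent range without overlap, and running the FFT with enough precision that a coefficient which should be $0$ is not rounded to a small nonzero value — which is handled either by an exact number-theoretic FFT or by rounding each computed coefficient to the nearest integer, valid since the true coefficients are integers bounded by $n$. Combining the $\tO(nu)$ cost of the polynomial multiplication with the $O(nu)$ cost of building the inputs and extracting the outputs gives the claimed bound.
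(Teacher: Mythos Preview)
Your argument is correct and is essentially the standard folklore reduction the paper alludes to. The paper's proof is a one-line pointer (``FFT on large numbers with $\tilde O(u)$ bits''), which is the integer/Kronecker-substitution phrasing of exactly the polynomial embedding you wrote out: encoding each entry in the exponent, spacing indices by $2u+1$ to keep windows disjoint, multiplying via FFT in $\tO(nu)$ time, and reading off the minimum per window.
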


\begin{proof}
The algorithm is folklore, see e.g.~\cite{ChanL15}. The idea is by performing FFT on large numbers with $\tilde{O}(u)$ bits.
\end{proof}

\begin{lemma}[batched factoring]
We can compute the list of factors $D_i$ for all integers $i=1,\dots,n$ in $O(n\log n)$ time.
\end{lemma}

\begin{proof}
This is well-known, but for completeness we include a short proof here. We compute all lists of factors $D_i$ as follows. For $i=1,\dots,n$, enumerate each multiple $k=j\cdot i$ of $i$ (where $1\leq j\leq \lfloor\frac{n}{i}\rfloor$), then add $i$ to the list $D_k$. The running time is $O\left(\sum_{i=1}^{n} \frac{n}{i}\right)=O(n\log n)$.
\end{proof}

\paragraph{Main algorithm.} Now we explain the key ideas to achieve near-linear time for computing $f(n)$ for all $n\leq N$. The DP formula for handling multiplications only takes $O(\sum_{i=1}^N \sigma(i))=O(N\log N)$ overall time, where $\sigma(i)$ denotes the number of factors of $i$.

The current bottleneck is addition. Notice that if we only focus on additions, the DP formula $f(n)=\min_{1\leq i\leq n-1}\left(f(i)+f(n-i)\right)$ is actually an online $(\min,+)$-convolution. Furthermore, $f(n)=O(\log n)$ by Lemma~\ref{lemma:upper_bound}, so we are dealing with an online $(\min,+)$-convolution with entries bounded by $u=O(\log n)$. Combing the techniques of Lemma~\ref{lemma:online_conv} (which analogously also works for online $(\min,+)$-convolutions) and Lemma~\ref{lemma:min_plus_conv} yield the result.

A pseudocode of our all-targets algorithm is shown as follows:
\begin{algorithm}[!htbp]
\caption{Algorithm for computing $f(n)$ for all $n\leq N$}
\label{alg:all_targets}
\begin{algorithmic}[1]
\State Precompute the list of factors $D_i$ for each integer $i=1,\dots,N$.
\State Initialize $f(1)=1$ and $f(2),\dots,f(N)=\infty$.
\Function{func}{$l$, $r$}
    \If{$l=r$}
        \State Update $f(l)$ using multiplication: $f(l)=\min\left\{f(l),\min_{i\in D_l} (f(i)+f(l/i))\right\}$.
        \State Return
    \EndIf
    \State Set $m=(l+r)/2$.
    \State \Call{func}{$l$, $m$}
    \State Update $f(m+1),\dots,f(r)$, by computing the $(\min,+)$-convolution $[f'(m+1),\dots,f'(r)]$ of $[f(l),\dots,f(m)]$ and $[f(1),\dots,f(\min\{r-l,m\})]$ (appropriately truncated).
    \State \Call{func}{$m+1$, $r$}
\EndFunction
\State \Call{func}{$1$, $N$}
\end{algorithmic}
\end{algorithm}

The correctness proof is similar to the one for Lemma~\ref{lemma:online_conv}, with a simple modification for handling the multiplications. For completeness we provide the details here.
\begin{proof}
Suppose that for all $n\leq n_0$, $f(n)$ is computed correctly after the function \textsc{func}$(n,n)$ returns. We now prove that for $n=n_0+1$, $f(n)$ is also computed correctly after the function \textsc{func}$(n,n)$ returns. If $f(n)$ is obtained by multiplication ($f(n)=f(i)+f(n/i)$, $i>1$), then $f(n)$ will be computed correctly when we execute line 5 of \textsc{func}$(n,n)$, since $f(i)$ and $f(n/i)$ are both already computed correctly by the induction hypothesis. Otherwise if $f(n)$ is obtained by addition ($f(n)=f(i)+f(n-i)$, w.l.o.g.\ assume $i\leq n-i$), then consider the LCA node \textsc{func}$(l,r)$ of the two leaves \textsc{func}$(n-i,n-i)$ and \textsc{func}$(n,n)$ in the recursion tree. After performing \textsc{func}$(l,m)$, both $f(i)$ and $f(n-i)$ are already computed correctly (since $i\leq n-i\leq m<n\leq r$), and $f(n)$ will be updated by $f(i)+f(n-i)$ during the $(\min,+)$-convolution (since $l\leq n-i\leq m$, and $i\leq n-(n-i)\leq r-l$).
\end{proof}

The running time of the algorithm (apart from line 5) satisfies the recurrence $T(n)=2T(n/2)+\tO(n)$, which solves to $T(N)=\tO(N)$. Handling multiplications in line 5 takes overall $\tO(N)$ time.

\begin{theorem}\label{thm:algo_all}
Given an integer $N$, there exists an algorithm that can compute $f(n)$ for all $n\leq N$ in $O(N\polylog N)$ time.
\end{theorem}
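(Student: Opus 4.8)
The plan is to assemble the pieces already laid out in the excerpt into a clean running-time analysis for Algorithm~\ref{alg:all_targets}. The correctness has already been established by the induction on $n_0$ given before the theorem statement, so the remaining work is entirely about the time bound. First I would fix notation: let $T(n)$ denote the running time of a call \textsc{func}$(l,r)$ with $r-l+1=n$, excluding the cost of the multiplication updates performed at the leaves (line~5). The recursion makes two recursive calls on subproblems of size $n/2$, plus the $(\min,+)$-convolution step that combines $[f(l),\dots,f(m)]$ with a prefix of $[f(1),\dots]$, both of length $O(n)$. Here I would invoke Lemma~\ref{lemma:upper_bound} to note that every finite value $f(i)$ with $i\le N$ is bounded by $u=O(\log N)$; the $\infty$ entries can be capped at, say, $2u+1$ without affecting the $(\min,+)$-convolution output on the relevant range, so Lemma~\ref{lemma:min_plus_conv} applies with bound $u=O(\log N)$ and gives cost $\tO(n\cdot\log N)=\tO(n)$ for the convolution at a node of size $n$. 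Thus $T(n)=2T(n/2)+\tO(n)$, which solves to $T(N)=\tO(N)=O(N\polylog N)$.

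Next I would account separately for line~5, the multiplication updates at the leaves. Each leaf \textsc{func}$(l,l)$ spends $O(\sigma(l))=O(|D_l|)$ time, where $\sigma(l)$ is the number of divisors of $l$, and the lists $D_l$ were precomputed. Summing over all leaves $l=1,\dots,N$ gives $\sum_{l=1}^N \sigma(l)=O(N\log N)$ by the batched-factoring bound already proved in the excerpt (equivalently $\sum_{l\le N}\lfloor N/l\rfloor = O(N\log N)$). The precomputation of all $D_i$ in line~1 likewise costs $O(N\log N)$. Adding the three contributions — precomputation $O(N\log N)$, the divide-and-conquer recursion $\tO(N)$, and the multiplication updates $O(N\log N)$ — yields the claimed total $O(N\polylog N)$.

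The one point that needs a little care, and which I would treat as the main (mild) obstacle, is justifying that the $(\min,+)$-convolution inside the online/divide-and-conquer scheme genuinely runs in $\tO(n)$ per level despite the presence of as-yet-uncomputed $\infty$ entries and the truncation of the second array. The resolution is standard: at the moment \textsc{func}$(l,r)$ performs its convolution, all entries $f(l),\dots,f(m)$ on the left and $f(1),\dots,f(\min\{r-l,m\})$ on the right that can legitimately contribute to a finite $f(n)$ for $n\in[m+1,r]$ have already been finalized by the preceding call \textsc{func}$(l,m)$ (and by earlier calls, for the small-index prefix), exactly as argued in the correctness proof; any remaining entry may safely be replaced by the sentinel value $2u+1>2u\ge f(n)$, so capping keeps all entries non-negative integers bounded by $O(\log N)$ and Lemma~\ref{lemma:min_plus_conv} applies verbatim. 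With that observation in hand the recurrence and the divisor-sum bound are purely routine, and the theorem follows.
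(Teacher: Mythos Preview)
Your proposal is correct and follows essentially the same approach as the paper: the paper's own analysis is just the terse statement that the recursion (excluding line~5) satisfies $T(n)=2T(n/2)+\tO(n)$, hence $T(N)=\tO(N)$, and that line~5 costs $\tO(N)$ overall. Your write-up simply fleshes out the details the paper leaves implicit (invoking Lemma~\ref{lemma:upper_bound} to bound entries, capping $\infty$ values by a sentinel so Lemma~\ref{lemma:min_plus_conv} applies, and accounting for the $O(N\log N)$ precomputation and divisor-sum), which is fine.
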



\ignore{
\paragraph{Remark.} \ \\
1. For this particular version of $(\min,+)$-convolution, I'm not sure about which paper to cite, it's probably folklore. The general idea is as follows: $(\min,+)$-convolution on small integers bounded by $[u]$ can be reduced to classical convolution, and can be computed in $O(un\log n)$ time (folklore and easy, by performing FFT on large numbers with $\tilde{O}(u)$ bits; e.g.\ mentioned in \cite{ChanL15}), here $u=O(\log n)$. Semi-online convolutions (i.e.\ of the form $f_i=\sum_{j=1}^{i-1} f_{i-j}f_j$, the current value $f_i$ depending on previously computed $f_j$'s) can be reduced to classical convolutions\footnote{I currently don't know which paper to cite; there are a number of references (in Chinese), e.g.\ \url{https://hly1204.blog.uoj.ac/blog/7319}, \url{https://qaq-am.com/cdqFFT/}, \url{https://www.cnblogs.com/whx1003/p/14582203.html}. update: can cite \cite{Hoeven02a} and \cite{Hoeven07c} for semi-online and online convolutions.} (which takes $O(n\log n)$ time by FFT), by performing divide and conquer on the array indices (recursively compute $f[1,\dots,n/2]$ first, use these values to update $f[n/2+1,\dots,n]$, and recursively compute $f[n/2+1,\dots,n]$), with the overhead of adding an additional $O(\log N)$ factor to the running time (or $\frac{\log N}{\log\log N}$, or $\log^{o(1)} N$ \cite{Hoeven07c}?). This approach generalizes to semi-online $(\min,+)$-convolution. For our algorithm, when we reach a leaf of the recursion tree representing the integer $i$, we can handle the DP formula for getting $i$ via multiplication. In summary, the total running time is $O(N\log^3 N)$ (or $O(N\log^{2+o(1)} N)$?).\\
2. todo: figure out what's the optimal number of logs.\\
3. I don't think this algorithm is faster in practice, at least for the current version with $3$ $\log$s... $O(N^{1.22291123})$ already seems to be close to linear time, and that running time is probably not tight (say, with better upper and lower bounds from number theory).\\

\paragraph{Improving the space.} Fuller's algorithm, explained by \cite{de2014algorithms}.
$\tO(n)$ time and $\tO(n^{(1+\ell)/2})\approx \tO(n^{0.6875})$ space.

further improve: use variable $L$.
}

\section{Algorithm for a Single Target}\label{sec:alg for single N}
To compute a single $f(n)$, we note the algorithms in previous works actually compute $f(n_0)$ for all targets $n_0\leq n$ (e.g., the $O(n^2)$ naive DP, and \cite{venkatesh2008integer,Fuller08code,de2014question,de2014algorithms,cordwell2017algorithms}). Here we show that if one is only interested in computing a single $f(n)$, then it is possible to spend only sublinear time.

\paragraph{A special recursion.} The idea is based on a special recursion as follows: say $f(n)$ is obtained by addition as the last operation. Then we can decompose $f(n)$ into $f(i)+f(n-i)$, where $1\leq i\leq L(n)$ (here $L(n)=n^\ell$ is an upper limit defined earlier, and $\ell=\alpha/3-1$), and $f(n-i)$ is obtained by multiplication (if $f(n)$ is obtained by adding multiple terms that sum to $n$, we let $n-i$ be the largest term. The observation of \cite{venkatesh2008integer} guarantees that $i\leq L(n)$). Let $f'(n)$ denote the integer complexity of $n$ where the last operation is required to be a multiplication (the special case is $f'(1)=1$). Then we want to know the $f'$ values in the range $[n-L(n),n)$, in order to get the correct value for $f(n-i)$. Those $f'$ values recursively depend on the $f$ values in intervals $[(n-L(n))/d,n/d]$ for some integer $d\geq 2$, when the $f'$ values are obtained by multiplying $d$ with another integer.

The key observation is the length of such intervals we want to compute converges to at most $2L(n)$: if we want to know the $f$ values in $[n/d_1-L(n),n/d_1]$ for some integer $d_1$, we only need the $f'$ values in $[n/d_1-2L(n),n/d_1]$, which further depend on the $f$ values in $[(n/d_1-2L(n))/d_2,n/d_1/d_2]$ for all possible integer $d_2\geq 2$. This is a subinterval of $[n/d-L(n),n/d]$ where $d=d_1d_2$.

We can now design our algorithm as follows. Let $t$ be a parameter to be chosen later. First use the algorithm in Sec.~\ref{sec:alg for all N} to precompute $f(n_0)$ and $f'(n_0)$ for all $n_0\leq n^t$ in $\tO(n^t)$ time. Then recursively compute the $f$ values in all intervals of the form $[n/d-L(n),n/d]$ for some integer $d\geq 1$ (and $f'$ values in all intervals of the form $[n/d-2L(n),n/d]$). For large enough $d$ such that $n/d\leq n^t$, we know that the $f$ and $f'$ values in the interval has already been precomputed. So we only need to compute $n^{1-t}$ such intervals.

We note that recursions of similar forms have appeared earlier for solving other problems, e.g., for unbounded knapsack and change-making \cite{AxiotisT19,ChanH22}, although they only require one recursive call on $d_1=2$, while ours are more involved, requiring several recursive branches.

For handling the multiplications, we can use Dixon's method~\cite{dixon1981asymptotically} for factoring an integer $n$ in $e^{O(\sqrt{\log n\log\log n})}=n^{o(1)}$ time, and then use the DP formula for multiplication.

\ignore{
\paragraph{Factoring.} Here we discuss different approaches for integer factoring, and analyze their running time.

1) We can use \href{https://en.wikipedia.org/wiki/Dixon\%27s\_factorization_method}{Dixon's method} with running time $e^{O(\sqrt{\log n\log\log n})}=n^{o(1)}$ (with a rigorous proof for the time complexity) to factor a single number at most $n$, which is fast enough.

2) An alternative is to use \cite[Theorem 3.8]{bringmann2021near} to handle all numbers that we want to factor in a batch, which takes $\tilde{O}(\sqrt{n})$ time in addition to the overall number of factors (and is not the bottleneck for the current algorithm).

3) For the current bound of $L(n)$, similar to the simple factorization method in Sec.~\ref{sec:alg for all N}, we can just factor each interval with right endpoint $n/d$ in $\tilde{O}(\sqrt{n/d}+L(n/d))=\tilde{O}(L(n/d))$ time, by enumerating all multiples of $i$ in the interval for $i=2,\dots,\sqrt{n/d}$. (The running time is $\sum_{i=2}^{\sqrt{n/d}}\lceil \frac{L(n/d)}{i}\rceil \leq \sum_{i=2}^{\sqrt{n/d}}(\frac{L(n/d)}{i}+1)\leq L(n/d)\log n+\sqrt{n/d}$.)

todo: write more on this.

update: compute multiples instead.
}

\paragraph{Running time analysis.} Now we analyze the running time of our algorithm. After precomputing the $f(n_0)$ and $f'(n_0)$ values for all $n_0\leq n^t$ in $\tO(n^t)$ time, we need to compute the $f$ and $f'$ values in all intervals of the form $[n/d-\Theta(L(n)),n/d]$ where $1\leq d\leq n^{1-t}$. To compute the $f$ values within one interval $[n/d-L(n),n/d]$ from $f'$, we use Lemma~\ref{lemma:min_plus_conv} for $(\min,+)$-convolution with entries bounded by $O(\log n)$, which takes $\tO(L(n))$ time. To compute the $f'$ values within an interval $[n/d-2L(n),n/d]$ from $f$, for all $i\in [n/d-2L(n),n/d]$, we enumerate all factors $d_1$ of $i$ and use the DP formula for multiplication. Using the upper bound $\sigma(i)=2^{O(\log i/\log\log i)}$ for the maximum number of factors of $i$, this step takes $O(\sum_{i\in [n/d-2L(n),n/d]}\sigma(i))=L(n)\cdot n^{o(1)}$ time.

The total running time is $\tO(n^t+n^{1-t}\cdot L(n)\cdot n^{o(1)})$. Set $t=\alpha/6$ to balance the terms, we get $n^{\alpha/6+o(1)}$ total running time.

\begin{theorem}
There exists an algorithm that can compute a single $f(n)$ in $O(n^{\alpha/6+o(1)})\leq O(n^{0.6875})$ time and space, where $\alpha=\sup_{n>1}\frac{f(n)}{\log_3 n}$.
\end{theorem}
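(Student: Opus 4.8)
The plan is to turn the informal ``special recursion'' sketched above into an algorithm whose dependency structure is provably acyclic, and then to balance the two cost terms. I would first pin down the two mutual recurrences. For $n>1$ let $f'(n)$ be the integer complexity of $n$ conditioned on the outermost operation being a multiplication, with the convention $f'(1)=1$; then
\[
f(n) \;=\; \min\Bigl\{\, f'(n),\ \min_{1\le i\le L(n)}\bigl(f(i)+f'(n-i)\bigr)\,\Bigr\},
\qquad
f'(n) \;=\; \min_{\substack{d\mid n\\ 2\le d\le \sqrt n}}\bigl(f(d)+f(n/d)\bigr),
\]
where $L(n)=n^{\ell}$ with $\ell=\alpha/3-1$. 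The restriction $i\le L(n)$ in the addition rule is justified as in \cite{venkatesh2008integer}: in an optimal expression for $n$ whose top-level operation is $+$, pulling out all but the largest top-level summand $a$ gives $f(n)=f(n-a)+f(a)$, and $a$ (unless $a=1$, which happens only for tiny $n$) is itself the value of a product, so $f(a)=f'(a)$; writing $i=n-a\le a$ and combining $f(i)\ge 3\log_3 i$ (Selfridge) with $f(n)\le\alpha\log_3 n$ (Lemma~\ref{lemma:upper_bound}) forces $i\le n^{\alpha/3-1+o(1)}\le L(n)$. This structural fact --- the largest top-level summand is a multiplication --- is exactly what keeps the lower end of the addition window in $f'$ rather than $f$.

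The heart of the argument is a closure lemma for intervals. For an integer $d\ge 1$ put $I_d=[\,n/d-L(n),\,n/d\,]\cap\mathbb Z$ and $I'_d=[\,n/d-2L(n),\,n/d\,]\cap\mathbb Z$. I would prove: (i) evaluating $f$ on $I_d$ via the first recurrence needs only $f$-values in $[1,L(n)]\subseteq[1,n^t]$ together with $f'$-values in $I'_d\supseteq I_d$ (mining over all $i\in[1,L(n)]$ rather than $i\le L(\cdot)$ is harmless since extra terms never underestimate $f$); and (ii) evaluating $f'$ on $I'_d$ needs, for each $m\in I'_d$ and each divisor $2\le d_2\le\sqrt m$ of $m$, the value $f(d_2)$ with $d_2\le\sqrt n\le n^t$, and the value $f(m/d_2)$, which lies in $I_{dd_2}$ because $2L(n)/d_2\le L(n)$. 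Hence every index ever requested lies in $[1,n^t]\cup\bigcup_{d\ge1}(I_d\cup I'_d)$, and each recursive reference either falls into the precomputed prefix $[1,n^t]$ or passes from $I_d$ (or $I'_d$) to $I_{d'}$ with $d'$ a strict multiple of $d$. Therefore the dependency graph on $\{I_d,I'_d:1\le d\le\lceil n^{1-t}\rceil\}$ is acyclic: one processes the pairs $(I'_d,I_d)$ for $d=\lceil n^{1-t}\rceil,\lceil n^{1-t}\rceil-1,\dots,1$, computing $f'$ on $I'_d$ first and then $f$ on $I_d$, and reads any index below $n^t$ from the table.

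The algorithm is then: Step 1, run Theorem~\ref{thm:algo_all} to tabulate $f$ (hence $f'$, which is read off its multiplication part) for all values $\le n^t$, in $\tO(n^t)$ time and space. Step 2, for $d=\lceil n^{1-t}\rceil$ down to $1$, skip $d$ with $n/d\le n^t$; otherwise (a) factor each $m\in I'_d$ by Dixon's method \cite{dixon1981asymptotically} in $n^{o(1)}$ time, enumerate its $m^{o(1)}$ divisors and set $f'(m)$ by the multiplication recurrence, reading each needed $f$-value from the table or from an already computed $I_{d'}$ with $d'>d$; then (b) compute $f$ on $I_d$ from $f'$ on $I'_d$ by a single bounded $(\min,+)$-convolution (Lemma~\ref{lemma:min_plus_conv}) of the precomputed array $\bigl(f(1),\dots,f(L(n))\bigr)$ with $\bigl(f'(m)\bigr)_{m\in I'_d}$, all entries $O(\log n)$, in $\tO(L(n))$ time, and take the minimum with $f'$ on $I_d$. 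Step 3, output the entry of $I_1$ at index $n$. Correctness follows from the closure lemma by induction on $n/d$. For the running time: Step 1 costs $\tO(n^t)$; Step 2 runs $\le n^{1-t}$ iterations, each costing $\tO(L(n))+|I'_d|\cdot n^{o(1)}=n^{\alpha/3-1+o(1)}$, a total of $n^{1-t+\alpha/3-1+o(1)}$. Setting $t=\alpha/6$ balances the two to $n^{\alpha/6+o(1)}\le n^{0.6875}$ (using $\alpha\le4.125$), and the space is dominated by the $\tO(n^t)$ table plus the $\le n^{1-t}$ stored windows of width $O(L(n))$, i.e.\ $\tO(n^{\alpha/3-t})=\tO(n^{\alpha/6})$ as well.

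The main obstacle is the closure lemma of Step 2 --- ensuring the recursion never escapes $[1,n^t]\cup\bigcup_d(I_d\cup I'_d)$ and always terminates. The delicate points are: the two-sided $L(n)$-versus-$2L(n)$ bookkeeping (the slack of $2L(n)$ in $I'_d$ is precisely what a single division by $d_2\ge2$ can absorb, and the matching slack is what lets an addition $p=i+(p-i)$ with $i\le L(p)$ keep $p-i$ inside $I'_d$, so that the window does not keep widening); ruling out the trivial factorization $m=1\cdot m$ in the $f'$ recurrence so that computing $f'(m)$ never circularly requests $f(m)$; and checking that all ``small'' cofactors $\le\sqrt n$ land in the precomputed range $[1,n^t]$, which holds exactly because $t=\alpha/6\ge 1/2$. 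The remaining ingredients --- the $(\min,+)$-convolution bound, the $n^{o(1)}$ factoring cost, and the final exponent arithmetic --- are routine.
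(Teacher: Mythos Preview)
Your proposal is correct and follows essentially the same approach as the paper: the same $f/f'$ mutual recursion, the same interval-closure argument (your $I_d,I'_d$ coincide with the paper's $[n/d-L(n),n/d]$ and $[n/d-2L(n),n/d]$, and your observation $2L(n)/d_2\le L(n)$ for $d_2\ge2$ is exactly the paper's ``length converges to at most $2L(n)$''), the same use of Dixon factoring and bounded $(\min,+)$-convolution per window, and the same balancing $t=\alpha/6$. Your write-up is in fact a bit more careful than the paper's own exposition---for instance, you make explicit the check $t=\alpha/6\ge 1/2$ so that every small cofactor $d_2\le\sqrt n$ falls in the precomputed prefix---but the method is identical.
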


\paragraph{Further improvements.} To further improve the running time, the idea is to use a tighter bound $L(n/d)$ for intervals with right endpoint $n/d$, instead of using the global upper bound $L(n)$. A pseudocode of our improved single-target algorithm is shown in Alg.~\ref{alg:single_target}.
\begin{algorithm}[!htbp]
\caption{Algorithm for computing a single $f(n)$}
\label{alg:single_target}
\begin{algorithmic}[1]
\State Let $n_0=n^t$. Compute $f(i)$ for all $i\leq n_0$, by Alg.~\ref{alg:all_targets}.
\For {$d=n/n_0,\dots,1$}
    \State Set $L_0=4L(n/d)$.
    \For {$i=n/d-L_0,\dots,n/d$}
        \State $f(i)=\min\{f(i),\min_{j:j|i}(f(j)+f(i/j))\}$.
    \EndFor
\State Update $f(n/d-L_0),\dots,f(n/d)$, by computing the $(\min,+)$-convolution of $[f(n/d-L_0),\dots,f(n/d)]$ and $[f(1),\dots,f(L_0)]$.
\EndFor
\State Return $f(n)$.
\end{algorithmic}
\end{algorithm}

If we want to know the $f$ values in $[n/d_1-c\cdot L(n/d_1),n/d_1]$ for some integer $d_1$ (where the constant $c$ is to be determined), we need the $f'$ values in $[n/d_1-(c+1)\cdot L(n/d_1),n/d_1]$, which further depend on the $f$ values in $[(n/d_1-(c+1)\cdot L(n/d_1))/d_2,n/d_1/d_2]$ for all possible integer $d_2\geq 2$. This is a subinterval of $[n/d-c\cdot L(n/d),n/d]$ where $d=d_1d_2$, using the inequality $(c+1)\cdot (n/d_1)^\ell/d_2\leq c\cdot (n/(d_1d_2))^\ell$, i.e., $c+1\leq c\cdot d_2^{1-\ell}$ that holds for all $d_2\geq 2$, if we set $c=3$. 

The new total running time is
\[
\tO\left(n^t+n^{o(1)}\cdot \sum_{d=1}^{n^{1-t}} (n/d)^{\ell}\right)=\tO\left(n^t+n^{o(1)}\cdot n^\ell \cdot n^{(1-t)\cdot (1-\ell)}\right).
\]
Set $t=\frac{1}{2-\ell}$ to balance the terms, the total running time is $n^{3/(9-\alpha)+o(1)}$.

\begin{theorem}\label{thm:algo_single}
There exists an algorithm that can compute a single $f(n)$ in $O(n^{3/(9-\alpha)+o(1)})\leq O(n^{0.6154})$ time and space, where $\alpha=\sup_{n>1}\frac{f(n)}{\log_3 n}$.
\end{theorem}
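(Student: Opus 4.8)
The plan is to turn the ``special recursion'' described above into a precise recurrence, show that Algorithm~\ref{alg:single_target} implements it correctly via an induction on the outer loop, and then balance the two contributions to the running time. I would fix the parameters $\ell=\alpha/3-1$ (using any valid upper estimate, e.g.\ $\ell\le 0.375$ from $\alpha\le 4.125$), $L(m)=m^{\ell}$, and the threshold $n_{0}=n^{t}$ with $t=1/(2-\ell)$; note $0<\ell<t<1$, so $L(n)\le n_{0}$ and $\sqrt n<n_{0}$. Writing $f'(m)$ for the complexity of $m$ when the last operation is forced to be a multiplication ($f'(1)=1$ and $f'(m)=\min_{q\mid m,\,1<q<m}(f(q)+f(m/q))$ otherwise), the crux is the identity
\[
f(m)=\min\!\Big(f'(m),\ \min_{1\le a\le c_{0}L(m)}\big(f(a)+f'(m-a)\big)\Big)
\]
for an absolute constant $c_{0}$ and all $m$ above an absolute constant. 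The ``$\le$'' direction is immediate. For ``$\ge$'', take an optimal expression for $m$; if its last operation is $\times$ we are done, so assume it is $+$, regard it as a binary tree, and repeatedly peel off the smaller child of the current root, obtaining $m=a_{1}+\dots+a_{k}+b$ where $b$ is a ``core'' whose optimal expression ends in $\times$ (or $b=1$). By the Venkatesh--Shankar argument (Lemma~\ref{lemma:upper_bound}) each $a_{j}\le 2L(m)$, and $k\le f(m)=O(\log m)$, so the peeled mass $P:=\sum_{j}a_{j}=m-b$ is $O(L(m)\log m)<m/2$ for large $m$; hence $b\ge m/2$. Substituting $3\log_{3}b\ge 3\log_{3}m-3\log_{3}2$ into $\sum_{j}f(a_{j})+f(b)=f(m)\le\alpha\log_{3}m$ and using $f(a_{j})\ge 3\log_{3}a_{j}$ gives $\prod_{a_{j}\ge 2}a_{j}\le 2L(m)$, so $P\le 2L(m)+O(\log m)\le c_{0}L(m)$. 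Finally $f(m)=\sum_{j}f(a_{j})+f(b)\ge f(P)+f'(b)$, since $f(P)\le\sum_{j}f(a_{j})$ and $f'(b)=f(b)$; this matches the right-hand term with $a=P$.

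Next I would prove correctness of Algorithm~\ref{alg:single_target} by induction over the outer loop, run for $d=n/n_{0},\dots,1$. Fix a constant $c\ge 3$ and a window radius $L_{0}=\Theta(L(n/d))$ that is a large enough constant multiple of $L(n/d)$, and set $W_{d}=[\,n/d-cL(n/d),\,n/d\,]$; the invariant is that after iteration $d$, all $f$-values on $W_{d}$ are correct (and $W_{d}$ is already precomputed once $d\ge n/n_{0}$). Two containment facts drive the induction. (i) In lines~4--6, for an index $i$ of the enlarged window and any nontrivial factorization $i=q\cdot(i/q)$ with $2\le q\le\sqrt i$, the small factor satisfies $q\le\sqrt n<n_{0}$ (precomputed), while the large factor $i/q$ lies in $W_{qd}$, because the required $L_{0}/q\le cL(n/(qd))$ reduces to $c+1\le c\,q^{1-\ell}$ for all $q\ge 2$ (true at $c=3$, as in the paper's analysis, once the Venkatesh--Shankar constant is absorbed into $L$); since $qd>d$, $W_{qd}$ has already been computed (or $qd>n/n_{0}$ and $i/q$ is precomputed), so line~5 produces the correct $f'(i)$. (ii) Line~7 applies the $(\min,+)$-convolution of Lemma~\ref{lemma:min_plus_conv} --- valid since the entries are $O(\log n)$ by Lemma~\ref{lemma:upper_bound} --- to the window array against the precomputed $[f(1),\dots,f(L_{0})]$, realizing exactly the recurrence above on $W_{d}$: the small addend $a\le c_{0}L(i)$ is precomputed, the core value $f'(i-a)$ was just produced in lines~4--6, and the slack between $cL(n/d)$ and $L_{0}$ ensures that every index $i-a$ needed for $i\in W_{d}$ still falls inside the convolved window. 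In particular $f(n)$ is correct after the iteration $d=1$.

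For the running time, Theorem~\ref{thm:algo_all} performs the precomputation in $\tO(n^{t})$. Iteration $d$ costs $\tO(L(n/d))$ for the convolution plus $O\big(\sum_{i\in W_{d}}\sigma(i)\big)=L(n/d)\cdot n^{o(1)}$ for the multiplication updates, using $\sigma(i)=2^{O(\log i/\log\log i)}=i^{o(1)}$ and Dixon's factoring method~\cite{dixon1981asymptotically} in $n^{o(1)}$ time per integer; since $\ell<1$, $\sum_{d=1}^{n^{1-t}}L(n/d)=n^{\ell}\sum_{d}d^{-\ell}=\Theta\big(n^{\ell+(1-t)(1-\ell)}\big)$, so the total time is $\tO\big(n^{t}+n^{\ell+(1-t)(1-\ell)+o(1)}\big)$, and $t=1/(2-\ell)$ equalizes the exponents (the equation $\ell+(1-t)(1-\ell)=t$ rearranges to $t(2-\ell)=1$), giving $n^{t+o(1)}=n^{3/(9-\alpha)+o(1)}\le n^{3/4.875+o(1)}<n^{0.6154}$. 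Space is dominated by the precomputed table together with all stored windows (each reused by later iterations), $\tO(n^{t})+\Theta\big(\sum_{d}L(n/d)\big)=\tO(n^{t})$, the same bound.

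I expect the main obstacle to be the recurrence identity: obtaining a \emph{constant} addend bound $c_{0}L(m)$ rather than a polylogarithmic one, which rests on the observation that the peeled mass stays below $m/2$, so the core remains $\ge m/2$ and the product inequality $\prod_{a_{j}\ge 2}a_{j}\le 2L(m)$ applies. Everything downstream --- the window/constant bookkeeping in step~(ii) (the algebraic fact $c+1\le c\,q^{1-\ell}$ for $q\ge 2$, together with $\sqrt n<n_{0}$ and $\ell<t$), the $(\min,+)$-convolution, the factoring, and the sum $\sum_{d}d^{-\ell}$ --- is routine given the lemmas already available.
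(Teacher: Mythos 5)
Your proposal follows the paper's proof essentially step for step: precompute up to $n^{t}$ with the all-targets algorithm, maintain windows $[n/d-\Theta(L(n/d)),\,n/d]$ processed for $d$ decreasing, handle multiplication by factor enumeration (small factor precomputed, large factor in an earlier window via $c+1\le c\,d_2^{1-\ell}$), handle addition by bounded-entry $(\min,+)$-convolution, and balance at $t=1/(2-\ell)=3/(9-\alpha)$. The one place you go beyond the paper --- the peeling argument proving the identity $f(m)=\min\bigl(f'(m),\min_{a\le c_0L(m)}(f(a)+f'(m-a))\bigr)$ with a genuinely constant addend bound --- is a correct and welcome elaboration of what the paper only asserts by citing Venkatesh--Shankar.

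The single thing to repair is a constant mismatch between your two halves. Your peeling argument yields $c_{0}\approx 3$ (from $b\ge m/2$), but in step (ii) the slack between the inner radius $cL(n/d)$ and $L_{0}=(c+1)L(n/d)$ is only one unit of $L(n/d)$, so indices $i-a$ with $a$ up to $c_{0}L(i)$ and $i$ near the left edge of $W_{d}$ fall outside the convolved window when $c_{0}>1$. Either sharpen the identity --- once you know $P=O(L(m)\log m)=o(m)$ you get $b\ge m(1-o(1))$, and rerunning the product bound gives $P\le(1+o(1))L(m)$, matching the paper's radius bookkeeping --- or keep $c_{0}=3$ and enlarge the windows to inner radius $cL$ and outer radius $(c+c_{0})L$, which requires $c+c_{0}\le c\cdot 2^{1-\ell}$, i.e.\ a larger but still absolute $c$. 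Either fix is routine and leaves the exponent unchanged.
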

It is not hard to slightly modify our algorithm, so that the stated bounds in Theorem~\ref{thm:algo_all} and Theorem~\ref{thm:algo_single} also hold when replacing $\alpha$ with $\alpha_0$, where $\alpha_0=\limsup_{n\rightarrow \infty}\frac{f(n)}{\log_3 n}$, so that $f(n)\leq \alpha_0\log_3 n$ holds for all but finitely many $n$.

We remark that some ingredients in our algorithm look similar to Fuller's algorithm as mentioned in \cite{de2014algorithms}, but the algorithm there is used to save space when computing all $f(n)$. Our main contribution is to observe that this recursion is helpful for the \emph{single-target} problem, requiring knowing only a few previous values in order to compute the current value, and present a nontrivial running time analysis for this simple recursive algorithm.

As a side comment, although to achieve the best running time we require the $(\min,+)$-convolution ideas from Sec.~\ref{sec:alg for all N}, even using the recursion idea alone is sufficient to obtain a non-trivial improvement to the previous methods.

\paragraph{Experiments.} We implemented our faster single-target algorithm, and performed experiments to search for counterexamples for several number-theoretical conjectures in a larger range. For more details, see Appendix~\ref{sec:experiments}.

\ignore{
\paragraph{Remark.} \ \\
1. This running time is improvable with more ideas (say combining with the ``average'' case analysis of \cite{cordwell2017algorithms} or other papers, and do more parameter tuning), but probably not much?\\
2. Can we improve the runnning time to, say $O(\sqrt{n})$? (The idea here is somewhat similar to the algorithms for computing $\pi(n)$.) Furthermore, is it possible to get $O(\polylog n)$ running time (or prove a lower bound)?\\
3. Is this faster algorithm for computing a single $f(n)$ useful in practice? Does it have any application, say, searching counter examples for some conjectures in a larger range? (Anyone want to implement it?)\\
4. There's also the issue of space complexity (which is also important in practice). It's possible to get a time-space tradeoff.
}

\ignore{
\section{Upper Bound for Almost All Integers}

\paragraph{Algorithms for proving an upper bound.} Let $\Cdense$ denote the infimum of all $C$ such that $f(n)\leq C\log_3 n$ holds for a set of numbers with natural density 1. J.\ Arias de Reyna and J.\ Van de Lune~\cite{de2014algorithms} defined $D(b,r)$ as an upper bound on the cost for multiplying by $b$ and then adding $r$, and proved that
\[\Cdense\leq \frac{1}{b\log_3 b}\sum_{r=0}^{b-1}D(b,r).\]
This inequality turns the problem of proving a good upper bound for $\Cdense$ into a problem requiring finite amount of computation. Using the base $b=2^93^8$, they computed the value of all $D(b,r)$ for $0\leq r\leq b-1$, and showed that $\Cdense\leq 3.63430$. This generalized an earlier argument by Isbell and Guy~\cite{guy1986some}, who only used the bases $6$, $12$ and $24$. Cordwell et al.~\cite{cordwell2017algorithms} further improved the bound to $\Cdense\leq 3.61989$, using a larger base $2^{11}3^9$. Using larger bases is helpful for improving the bound on $\Cdense$, so the problem is of a computational nature.

On the other hand, Shriver~\cite{shriver2015application} proved an upper bound $3.52\log_3 n$ by Markov chain analysis, improving and correcting the previous result by Steinerberger~\cite{steinerberger2014short}, but this result is only proved to hold on a set of logarithmic density one.

In terms of techniques, \cite{de2014algorithms} and \cite{cordwell2017algorithms} computed $D(b,r)$ for a fixed $b$ and all $0\leq r\leq b-1$, by dynamic programming. Their algorithm has running time $\tO(b)$, making it difficult to extend to a larger base $b$. Our new idea is to notice that in order to compute a good upper bound for $\Cdense$, we don't need to compute the exact value for the average $D(b,r)$; an approximation suffices. So instead, we first design an efficient algorithm for computing a single $D(b,r)$, then estimate the average $D(b,r)$ by fixing $b$ and randomly sampling $r$.

\paragraph{Dynamic programming for single-target $D(b,r)$.} Choose a base $b=2^{n_1}3^{n_2}$, where $n_1$ and $n_2$ are two parameters. Let $b_{i,j}=\frac{b}{2^i3^j}$, $r_{i,j}=\lfloor \frac{r}{2^i3^j}\rfloor$. Let $f[i][j]$ denote the optimal upper bound on $D(b_{i,j},r_{i,j})$. $f[i][j]$ can be computed by the recursive formula $f[i][j]=\min\{f[i][j-1]+r_{i,j-1}\bmod 3,f[i-1][j]+r_{i-1,j}\bmod 2\}$.

If we naively evaluate the recursive formula, the total running time is $O(n^3)$ (because of arithmetic operations on large numbers). Using more number-theoretic tricks, we can improve the running time to $O(n^2)$. The algorithm is shown in Alg.~\ref{alg:dense_DP}.

\begin{algorithm}
\caption{Algorithm for computing a single $D(b,r)$}
\label{alg:dense_DP}
\begin{algorithmic}[1]
\State
\For {$i=0,\dots,n_1$}
    \For {$j=0,\dots,n_2$}
    \State $f[i][j]=\min\{f[i][j-1]+r_{i,j-1}\bmod 3,f[i-1][j]+r_{i-1,j}\bmod 2\}$.
    \EndFor
\EndFor
\end{algorithmic}
\end{algorithm}

We only need to use $\tO(\frac{1}{\eps^2})$ samples for additive error $\eps$, so this is an exponential improvement compared to the previous methods.

We collected $10^5$ samples for $n_1=11500$ and $n_2=10000$, and obtained an average of $3.464782$. The $99.999\%$ confidence interval has length $\leq 0.000031$.

\begin{theorem}
With probability >99.999\%, $f(n)\leq 3.4648\log_3 n$ for a set of numbers with natural density 1.
\end{theorem}

\paragraph{Using more primes in the base.} It is not hard to modify the DP algorithm to support more primes in the base, but the running time exponent will depend on the number of primes. We let the base be of the form $b=2^{n_1}3^{n_2}5^{n_3}$. The result is improved.

We collected $10^5$ samples for $n_1=1000$, $n_2=1100$ and $n_3=150$, and obtained an average of $3.405170$. The $99.999\%$ confidence interval has length $\leq 0.000073$.

\begin{theorem}
With probability >99.999\%, $f(n)\leq 3.4053\log_3 n$ for a set of numbers with natural density 1.
\end{theorem}


\paragraph{Heuristic single-target algorithm.} This also implies a better heuristic algorithm for upper-bounding the integer complexity for very large integer $n$, with running time $O(\log^2 n)$. \cite{shriver2015application} (contains error?). See Table~\ref{table:heuristic}. We can see that our new algorithm yields an improvement compared to the previous best algorithm~\cite{shriver2015application}.

\begin{table}[!htbp]\centering
\begin{tabular}{|c|c|c|c|c|}
\hline
$n$ & base 6 greedy & \cite{shriver2015application} (?) & new (2,3) & new (2,3,5)\\
\hline
$\lfloor\pi\cdot 10^{100}\rfloor+10^{1000}$ & 7645 & 7372 & 7291 & 7138 \\
\hline
$\lfloor\sqrt{2}\cdot 10^{100}\rfloor+10^{2000}$ & 15365 & 14718 & 14541 & 14230 \\
\hline
$\lfloor e\cdot 10^{100}\rfloor+10^{3000}$ & 22896 & 22083 & 21765 & 21293 \\
\hline
\end{tabular}
\caption{Comparison of performance between the previous algorithms and the new heuristic algorithms with base $(2,3)$ and $(2,3,5)$, on a few ``arbitrarily'' chosen large numbers.}
\label{table:heuristic}
\end{table}

}

\section{Conclusion}
We conclude with a number of open problems:
\begin{itemize}
\item Can one prove a tighter upper bound for integer complexity? It is still open that whether the constant coefficient $\alpha_0$ can be as small as $3+\eps$. The running time of our single-target algorithm will be improved as long as one prove a better bound on $\alpha$ (or $\alpha_0$).
\item Can the integer complexity of $n$ be computed in polylogarithmic time? Currently no hardness results are known for this problem.
\end{itemize}

\paragraph{Acknowledgement.} We thank Timothy M.\ Chan, Ruta Mehta and Sariel Har-Peled for their CS374 undergraduate courses: the integer complexity problem was used as teaching material on the topic of dynamic programming. We thank Harry Altman for helpful discussions.

\bibliographystyle{plain}
\bibliography{references}

\appendix

\section{Experiments}\label{sec:experiments}
With more efficient algorithms, one can search for counterexamples for number-theoretic conjectures in a larger range. The up-to-date largest scale experiment is by Iraids et al.~\cite{iraids2012integer,vcercnenoks2015integer}, who computed all $f(n)$ up to $n\leq 10^{12}$, and presented their experimental observations.

We note that single-target algorithms are advantageous when verifying specific conjectures that only involve a sparse set of numbers. Nevertheless, most previous works used all-targets algorithms to perform the experiments. Therefore, with our new sublinear-time single-target algorithm in Sec.~\ref{sec:alg for single N}, we are able to perform experiments in a scale more than $10^{19}$ times larger, simply on a laptop.

We implemented our single-target algorithm in C++\footnote{The codes are available at \url{https://github.com/hqztrue/integer_complexity}.}. Some details are slightly modified for practical considerations (e.g., implementation of the 128-bit factoring algorithms, choosing the upper limit $L(n)$ in an instance-dependent way, computing lower bounds on the integer complexities~\cite{10.2140/moscow.2019.8.193} for pruning, and using brute-force implementation of $(\min,+)$-convolution instead of FFT-based), but the key idea of recursion remain the same.


\subsection{Conjectures on the integer complexity of $2^i$, $2^i3^j5^k$ and $2^i+1$}\label{sec:conj_power_of_2}

Rawsthorne~\cite{rawsthorne1989many,guy2004unsolved} posed the question that whether $f(2^i)=2i$ for all $i\geq 1$, which is viewed as one of the major open problems in this area~\cite{guy2004unsolved,iraids2012integer,altman2012numbers,de2014question,steinerberger2014short,vcercnenoks2015integer,zelinsky2022upper}. For example, if this hypothesis is proved, then an immediate corollary is there exists a constant $\eps>0$ such that $f(n)\geq (3+\eps)\log_3 n$ for infinitely many $n$, and thus $\alpha_0\geq 3+\eps$. More generally, Iraids et al.~\cite{iraids2012integer} conjectured that $f(2^i3^j5^k)=2i+3j+5k$ for all $i,j,k$ with $i+j+k>0$ and $k\leq 5$, and Iraids verified this conjecture for all $2^i3^j5^k\leq 10^{12}$ ($k\leq 5$).

We searched for all integers $2^i\leq 2^{126}\approx 10^{38}$, and all $2^i3^j5^k\leq 10^{36}$ ($k\leq 5$). Unfortunately, no counterexample was found.

Iraids et al.~\cite{iraids2012integer} conjectured that $f(2^i+1)=2i+1$ (except 3 and 9). We searched for all integers $2^i+1\leq 2^{121}+1\approx 3\cdot 10^{36}$, and no counterexample was found.

\subsection{Collapse of powers}
J. Arias de Reyna et al.~\cite{de2014question} and Iraids et al.~\cite{iraids2012integer,vcercnenoks2015integer} studied integers whose powers will collapse. We say $n$ \emph{collapses} at $i$, if $i$ is the least integer such that $f(n^i)<i\cdot f(n)$. (If such $i$ does not exist, then we say $n$ is \emph{resistant}.) This concept generalizes the study of $f(2^i)$ in Appendix~\ref{sec:conj_power_of_2}. Finding resistant integers is helpful for proving lower bounds on the constant coefficient $\alpha_0$, since if $n\neq 3$ is resistant, then there exists infinitely many integers $n^i$ such that $\frac{f(n^i)}{\log_3 (n^i)}=\frac{i\cdot f(n)}{i\log_3 n}=
\frac{f(n)}{\log_3 n}>3$. Currently, we only know the trivial bound $\alpha_0\geq 3$.

Iraids et al.~\cite{iraids2012integer} checked the primes within $1000$, and proved that most of them will collapse using the computational results, but there are still a few ones that are not known to collapse at all. Searching within a much larger range $10^{31}$, we found six more primes that will collapse, namely, $733$, $379$, $739$, $541$, $577$, and $811$.

\noindent $f(733^6)=119<120=6\cdot f(733)$, since
\begin{align*}
& 733^6=155104303499468569=(((((((((((((((((((1+1+1)\cdot (1+1+1))\cdot ((1+1+1)\cdot (1+1)))\\
& \cdot (((1+1+1)\cdot (1+1+1))\cdot (1+1+1)))\cdot ((((1+1+1)\cdot (1+1+1))\cdot ((1+1+1)\cdot (1+1))+1)\\
& \cdot (((1+1+1)\cdot (1+1)+1)\cdot (1+1+1)))+1)\cdot (((1+1)\cdot (1+1))\cdot (1+1+1)+1))\cdot ((1+1+1)\\
& \cdot (1+1)+1))\cdot (1+1+1))\cdot (1+1+1))\cdot (1+1))\cdot (1+1))\cdot (1+1))\cdot (1+1))\cdot (1+1)+1)\\
& \cdot (((((1+1+1)\cdot (1+1))\cdot (1+1+1))\cdot ((1+1+1)\cdot (1+1+1))+1)\cdot ((1+1+1)\cdot (1+1+1))\\
& +1))\cdot (((1+1+1)\cdot (1+1))\cdot (1+1+1)+1))\cdot ((1+1+1)\cdot (1+1)+1))\cdot (1+1+1))\\
& \cdot (1+1+1))\cdot (1+1)+1.
\end{align*}
$f(379^6)=107<108=6\cdot f(379)$, since
\begin{align*}
& 379^6=2963706958323721=((((((((((((((((((1+1+1)\cdot (1+1+1))\cdot ((1+1+1)\cdot (1+1+1))+1)\\
& \cdot ((1+1+1)\cdot (1+1))+1)\cdot ((1+1+1)\cdot (1+1+1)))\cdot (((((1+1)\cdot (1+1)+1)\cdot (1+1))\\
& \cdot ((1+1+1)\cdot (1+1))+1)\cdot (((1+1+1)\cdot (1+1+1))\cdot (1+1+1))))\cdot (1+1)+1)\cdot (1+1+1))\\
& \cdot (1+1+1))\cdot (1+1+1))\cdot (1+1)+1)\cdot ((((1+1+1)\cdot (1+1+1))\cdot (1+1+1))\cdot ((1+1+1)\\
& \cdot (1+1+1))+1))\cdot (((1+1+1)\cdot (1+1))\cdot (1+1+1)+1))\cdot ((1+1)\cdot (1+1)+1))\cdot (1+1+1))\\
& \cdot (1+1+1))\cdot (1+1+1))\cdot (1+1+1))\cdot (1+1)+1.
\end{align*}
$f(739^6)=119<120=6\cdot f(739)$, since
\begin{align*}
& 739^6=162879576091729561=((((((((((((((((((1+1+1)\cdot (1+1+1))\cdot (1+1+1))\cdot (((1+1+1)\\
& \cdot (1+1))\cdot (1+1+1)))\cdot (((((1+1)\cdot (1+1))\cdot (1+1+1))\cdot ((1+1+1)\cdot (1+1+1))+1)\cdot ((1+1)\\
& \cdot (1+1)))+1)\cdot ((((1+1+1)\cdot (1+1+1))\cdot (((1+1)\cdot (1+1))\cdot (1+1)))\cdot (((1+1+1)\cdot (1+1+1))\\
& \cdot (((1+1)\cdot (1+1))\cdot (1+1)))+1))\cdot (1+1+1))\cdot (1+1+1))\cdot (1+1)+1)\cdot (1+1+1))\cdot (1+1)\\
& +1)\cdot (((1+1+1)\cdot (1+1+1))\cdot ((1+1+1)\cdot (1+1+1))+1))\cdot (((1+1+1)\cdot (1+1))\cdot ((1+1)\\
& \cdot (1+1)+1)+1))\cdot ((1+1)\cdot (1+1)+1))\cdot (1+1+1))\cdot (1+1+1))\cdot (1+1+1))\cdot (1+1))\cdot (1+1)\\
& +1.
\end{align*}
$f(541^6)=113<114=6\cdot f(541)$, since
\begin{align*}
& 541^6=25071688922457241=((((((((((((((((((((((((((1+1+1)\cdot (1+1+1))\cdot (1+1+1))\\
& \cdot (((1+1+1)\cdot (1+1))\cdot (1+1+1))+1)\cdot (((1+1+1)\cdot (1+1)+1)\cdot ((1+1+1)\cdot (1+1)))+1)\\
& \cdot ((((1+1+1)\cdot (1+1))\cdot ((1+1+1)\cdot (1+1)))\cdot (((1+1+1)\cdot (1+1+1))\cdot (1+1+1))+1))\\
& \cdot (1+1+1))\cdot (1+1+1))\cdot (1+1))\cdot (1+1))\cdot (1+1))\cdot (1+1))\cdot (1+1)+1)\cdot ((1+1)\cdot (1+1)\\
& +1))\cdot ((1+1)\cdot (1+1)+1))\cdot (1+1+1))\cdot (1+1+1))\cdot (1+1+1))\cdot (1+1)+1)\cdot ((1+1)\\
& \cdot (1+1)+1))\cdot (1+1+1))\cdot (1+1+1))\cdot (1+1+1))\cdot (1+1+1))\cdot (1+1))\cdot (1+1))\cdot (1+1)+1.
\end{align*}
$f(577^{12})=227<228=12\cdot f(577)$, since
\begin{align*}
& 577^{12}=1361788799550131972374553991985921=(((((((((((((((((((((((((((((((1+1+1)\cdot (1+1\\
& +1))\cdot ((1+1+1)\cdot (1+1+1))+1)\cdot (((1+1+1)\cdot (1+1+1))\cdot ((1+1+1)\cdot (1+1+1)))+1)\\
& \cdot (((1+1+1)\cdot (1+1))\cdot (1+1+1))+1)\cdot (((((1+1+1)\cdot (1+1))\cdot (1+1+1))\cdot (((1+1)\cdot (1\\
& +1))\cdot ((1+1)\cdot (1+1))))\cdot ((((1+1+1)\cdot (1+1+1))\cdot (1+1+1))\cdot ((1+1+1)\cdot (1+1+1))\\
& +1)+1))\cdot (1+1+1))\cdot (1+1))\cdot (1+1))+1)\cdot (1+1+1))\cdot (1+1+1))\cdot (1+1+1))\cdot (1+1))\\
& +1)\cdot (((((1+1+1)\cdot (1+1))\cdot (1+1+1))\cdot (((1+1+1)\cdot (1+1))\cdot (1+1+1))+1)\cdot (((1+1\\
& +1)\cdot (1+1))\cdot (1+1+1))+1))\cdot (((((((((1+1+1)\cdot (1+1))\cdot (1+1+1))\cdot ((1+1+1)\cdot (1+1\\
& +1)))\cdot ((((1+1+1)\cdot (1+1))\cdot (1+1+1))\cdot ((1+1+1)\cdot (1+1+1)))+1)\cdot (((1+1+1)\cdot (1+1\\
& +1))\cdot ((1+1+1)\cdot (1+1)))+1)\cdot (((((1+1)\cdot (1+1)+1)\cdot (1+1+1))\cdot ((1+1+1)\cdot (1+1+1)))\\
& \cdot (((1+1+1)\cdot (1+1+1))\cdot ((1+1+1)\cdot (1+1+1)))+1))\cdot (1+1+1))+1))\cdot (((1+1+1)\cdot (1\\
& +1))\cdot (1+1+1)+1))\cdot ((1+1+1)\cdot (1+1)+1))\cdot (1+1+1))\cdot (1+1+1))\cdot (1+1+1))\cdot (1\\
& +1))\cdot (1+1))\cdot (1+1))\cdot (1+1))\cdot (1+1))\cdot (1+1))\cdot (1+1))\cdot (1+1))+1).
\end{align*}
$f(811^9)=179<180=9\cdot f(811)$, since
\begin{align*}
& 811^9=151770612880318395249730891=(((((((((((((((((((((((((((((((((((((1+1+1)\cdot(1+1+1))\\
& \cdot(1+1+1))\cdot(((1+1+1)\cdot(1+1))\cdot(1+1+1)))\cdot(((((1+1)\cdot(1+1)+1)\cdot(1+1)+1)\cdot(1+1))\\
& \cdot(((1+1+1)\cdot(1+1))\cdot(1+1+1)))+1)\cdot((((1+1)\cdot(1+1))\cdot(1+1))\cdot((1+1+1)\cdot(1+1)))\\
& +1)\cdot((((1+1+1)\cdot(1+1))\cdot(1+1+1))\cdot((1+1+1)\cdot(1+1+1))+1))\cdot((1+1)\cdot(1+1)+1))\\
& \cdot(1+1+1))\cdot(1+1+1))\cdot(1+1+1))\cdot(1+1+1))\cdot(1+1))\cdot(1+1))\cdot(1+1))\cdot(1+1))\cdot(1+1))\\
& \cdot(1+1))+1)\cdot((1+1)\cdot(1+1)+1))\cdot(1+1+1))\cdot(1+1+1))\cdot(1+1+1))\cdot(1+1+1))\cdot(1+1\\
& +1))\cdot(1+1))+1)\cdot((((((1+1+1)\cdot(1+1))\cdot(1+1+1))\cdot(((1+1)\cdot(1+1)+1)\cdot(1+1+1))\\
& +1)\cdot(1+1))\cdot((((1+1+1)\cdot(1+1+1))\cdot(1+1+1))\cdot(((1+1)\cdot(1+1)+1)\cdot(1+1+1)))+1))\\
& \cdot((1+1)\cdot(1+1)+1))\cdot(1+1+1))\cdot(1+1+1))\cdot(1+1+1))\cdot(1+1+1))\cdot(1+1+1))\cdot(1+1\\
& +1))\cdot(1+1))+1).
\end{align*}

The collapsing status of $109$, $433$, $163$, $487$ and $2$ are still unknown; notice that these are the primes within 1000 that have the smallest logarithmic complexities. We also complete other entries in Table 5 of \cite{iraids2012integer}, by determining that 991 collapses at 6, 257 collapses at 5, and 757 collapses at 6.

\subsection{Estimating the average integer complexity}

One question of interest is to analyze the (asymptotic) average integer complexity. Let $f_{\log}(n)=\frac{f(n)}{\log_3 n}$ denote the \emph{logarithmic integer complexity} of $n$, let $\bar{f}_{\log}(n)=\frac{1}{n-1}\sum_{i=2}^{n} f_{\log}(i)$ denote the average logarithmic integer complexity of the integers less or equal to $n$, and let $\Alphaave=\lim_{n\rightarrow \infty}\bar{f}_{\log}(n)$ denote the asymptotic average logarithmic integer complexity (if it exists). Our goal is to estimate the value of $\Alphaave$.

\begin{table}[!htbp]\centering
\begin{tabular}{|c|c|c||c|c|c|}
\hline
$n$ & average & \# samples & $n$ & average & \# samples\\
\hline
$10^{1}$ & 3.238373 & $10^{6}$ & $10^{14}$ & 3.342084 & $5\cdot 10^{5}$ \\
\hline
$10^{2}$ & 3.349894 & $10^{6}$ & $10^{15}$ & 3.338514 & $5\cdot 10^{5}$ \\
\hline
$10^{3}$ & 3.393001 & $10^{6}$ & $10^{16}$ & 3.335244 & $3\cdot 10^{5}$ \\
\hline
$10^{4}$ & 3.400376 & $10^{6}$ & $10^{17}$ & 3.332222 & $2\cdot 10^{5}$ \\
\hline
$10^{5}$ & 3.395626 & $10^{6}$ & $10^{18}$ & 3.329551 & $10^{5}$ \\
\hline
$10^{6}$ & 3.388161 & $10^{6}$ & $10^{19}$ & 3.327147 & $10^{5}$ \\
\hline
$10^{7}$ & 3.380561 & $10^{6}$ & $10^{20}$ & 3.324807 & $5\cdot 10^{4}$ \\
\hline
$10^{8}$ & 3.373263 & $10^{6}$ & $10^{21}$ & 3.322698 & $5\cdot 10^{4}$ \\
\hline
$10^{9}$ & 3.366459 & $10^{6}$ & $10^{22}$ & 3.320632 & $2\cdot 10^{4}$ \\
\hline
$10^{10}$ & 3.360493 & $10^{6}$ & $10^{23}$ & 3.318700 & $2\cdot 10^{4}$ \\
\hline
$10^{11}$ & 3.355246 & $10^{6}$ & $10^{24}$ & 3.316791 & $2\cdot 10^{4}$ \\
\hline
$10^{12}$ & 3.350321 & $5\cdot 10^{5}$ & $10^{25}$ & 3.315385 & 2000 \\
\hline
$10^{13}$ & 3.346080 & $5\cdot 10^{5}$ & & & \\
\hline
\end{tabular}
\caption{Estimating the average logarithmic integer complexity $\bar{f}_{\log}(n)$ for $n\leq 10^{25}$, by random sampling.}
\label{table:sampling}
\end{table}

\paragraph{Computing $\bar{f}_{\log}(n)$.} J. Arias de Reyna et al.~\cite{de2014question} have computed $\bar{f}_{\log}(n)$ for $n\leq 905000000$, and get $\bar{f}_{\log}(905000000)\approx 3.366$. They directly used this value as an estimation of $\Alphaave$, however, we point out that $\bar{f}_{\log}(n)$ seem to decrease when $n$ increases. A more precise estimation of $\Alphaave$ requires computing $\bar{f}_{\log}(n)$ for larger $n$, which is a challenging task for the previous superlinear-time algorithms. With our new single-target algorithm, we are able to approximately compute $\bar{f}_{\log}(n)$ for a wider range of $n$, by \emph{random sampling}.

Our computational results are shown in Table~\ref{table:sampling}. the $99.999\%$ confidence intervals have length $\leq 0.0013$ for $n\leq 10^{24}$, and length $\leq 0.0034$ for $n\leq 10^{25}$.

\paragraph{Predicting $\Alphaave$.}




We try to fit the data in Table~\ref{table:sampling} by curves, and the result is shown in Fig.~\ref{fig:fit}. We empirically use the model $\bar{f}_{\log}(n)=\frac{a}{x^2+bx+c}+d$ with parameters $a,b,c,d$ and $x=\log_{10} n$, and drop the first few data points (since the asymptotic behaviour may not be reflected when $n$ is too small). It seems that the limit of $\bar{f}_{\log}(n)$ exists and $\Alphaave\approx 3.29$, which is close to the previous ones reported in \cite{cordwell2017algorithms,iraids2012integer} (however, our new method seem to have higher precision).


\begin{figure}[!htbp]\centering
\includegraphics[width=.6\textwidth]{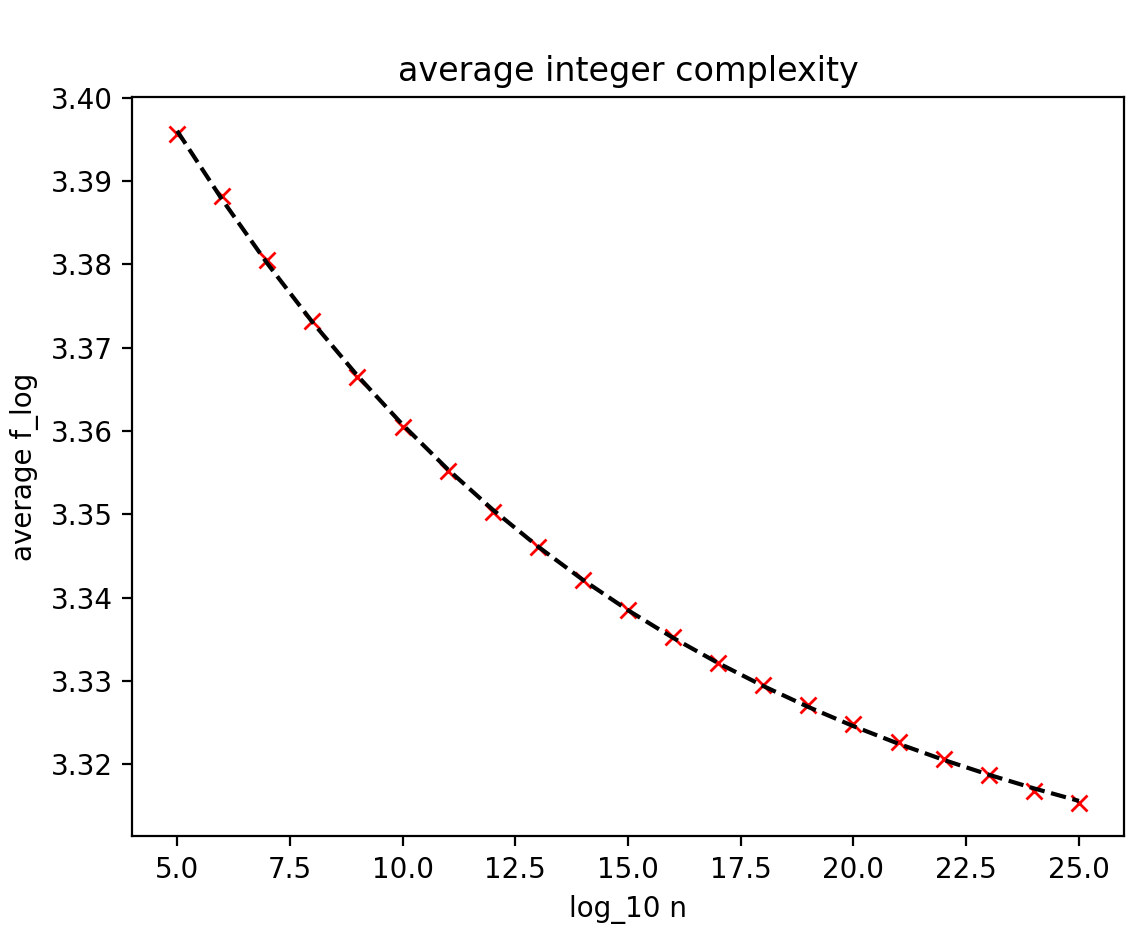}\\
\caption{Curve fitting for $\bar{f}_{\log}(n)$. Here the $x$-coordinate is $\log_{10} n$, and the $y$-coordinate is the computed approximation of $\bar{f}_{\log}(n)$. Empirically using the model $y=\frac{a}{x^2+bx+c}+d$, the best parameter is $a=28.3772$, $b=11.5122$, $c=184.398$, and $d=3.28974$.}\label{fig:fit}
\end{figure}

\ignore{

\paragraph{Conjecture for collapse of powers.} If we treat $n^k$ as random, then the following conjecture is likely to hold:

\begin{conjecture}
If $\bar{f}_{\log}(n)<\Alphaave$, then $n$ is unlikely to collapse.
\end{conjecture}

In particular, $\bar{f}_{\log}(2)\approx 3.170$, so we guess that the conjecture $f(2^i)=2i$ in Sec.~\ref{sec:conj_power_of_2} hold for all $i\geq 1$. This also explains why we don't yet know whether $n=577,811,109,433,163,487$ will collapse or not (their logarithmic integer complexity all $<3.284$).

\paragraph{Error in previous papers.} 1. \cite{de2014question}, Sec. 4.3, suggesting $f(2^i)=2i$ is false: incorrect. The distribution of $CR(n)$ will change when $n$ becomes larger. (~1-$\frac{1}{n^{0.1}}$)

2. \cite{shriver2015application}, 2.2. Limitations of these methods. For the basis algorithm, did not count the number of ones for multiply $b$.

\cite{steinerberger2014short} is known to have a flaw (see \cite{shriver2015application}).

\paragraph{Range of the smaller addendum.} Iraids et al.~\cite{iraids2012integer} noted that for all $n\leq 10^{12}$, the smaller addendum is at most $9$. We remark that ? used large addition (?)

\subsection{tmp}

$|2|\approx 3.1699 \log_3 n$.

average: $\approx 3.3\log_3 n$?

haven't collapse: 2, 487, 163, 433, 109, 811, 577.\\
conjecture: these $<$ average.

379^6=2963706958323721 107:
((((((((((((((((((1+1+1)*(1+1+1))*((1+1+1)*(1+1+1))+1)*((1+1+1)*(1+1))+1)*((1+1+1)*(1+1+1)))*(((((1+1)*(1+1)+1)*(1+1))*((1+1+1)*(1+1))+1)*(((1+1+1)*(1+1+1))*(1+1+1))))*(1+1)+1)*(1+1+1))*(1+1+1))*(1+1+1))*(1+1)+1)*((((1+1+1)*(1+1+1))*(1+1+1))*((1+1+1)*(1+1+1))+1))*(((1+1+1)*(1+1))*(1+1+1)+1))*((1+1)*(1+1)+1))*(1+1+1))*(1+1+1))*(1+1+1))*(1+1+1))*(1+1)+1
541^6=25071688922457241 113:
((((((((((((((((((((((((((1+1+1)*(1+1+1))*(1+1+1))*(((1+1+1)*(1+1))*(1+1+1))+1)*(((1+1+1)*(1+1)+1)*((1+1+1)*(1+1)))+1)*((((1+1+1)*(1+1))*((1+1+1)*(1+1)))*(((1+1+1)*(1+1+1))*(1+1+1))+1))*(1+1+1))*(1+1+1))*(1+1))*(1+1))*(1+1))*(1+1))*(1+1)+1)*((1+1)*(1+1)+1))*((1+1)*(1+1)+1))*(1+1+1))*(1+1+1))*(1+1+1))*(1+1)+1)*((1+1)*(1+1)+1))*(1+1+1))*(1+1+1))*(1+1+1))*(1+1+1))*(1+1))*(1+1))*(1+1)+1
733^6=155104303499468569 119:
(((((((((((((((((((1+1+1)*(1+1+1))*((1+1+1)*(1+1)))*(((1+1+1)*(1+1+1))*(1+1+1)))*((((1+1+1)*(1+1+1))*((1+1+1)*(1+1))+1)*(((1+1+1)*(1+1)+1)*(1+1+1)))+1)*(((1+1)*(1+1))*(1+1+1)+1))*((1+1+1)*(1+1)+1))*(1+1+1))*(1+1+1))*(1+1))*(1+1))*(1+1))*(1+1))*(1+1)+1)*(((((1+1+1)*(1+1))*(1+1+1))*((1+1+1)*(1+1+1))+1)*((1+1+1)*(1+1+1))+1))*(((1+1+1)*(1+1))*(1+1+1)+1))*((1+1+1)*(1+1)+1))*(1+1+1))*(1+1+1))*(1+1)+1
739^6=162879576091729561 119:
((((((((((((((((((1+1+1)*(1+1+1))*(1+1+1))*(((1+1+1)*(1+1))*(1+1+1)))*(((((1+1)*(1+1))*(1+1+1))*((1+1+1)*(1+1+1))+1)*((1+1)*(1+1)))+1)*((((1+1+1)*(1+1+1))*(((1+1)*(1+1))*(1+1)))*(((1+1+1)*(1+1+1))*(((1+1)*(1+1))*(1+1)))+1))*(1+1+1))*(1+1+1))*(1+1)+1)*(1+1+1))*(1+1)+1)*(((1+1+1)*(1+1+1))*((1+1+1)*(1+1+1))+1))*(((1+1+1)*(1+1))*((1+1)*(1+1)+1)+1))*((1+1)*(1+1)+1))*(1+1+1))*(1+1+1))*(1+1+1))*(1+1))*(1+1)+1

151770612880318395249730891 179:
(((((((((((((((((((((((((((((((((((((1+1+1)*(1+1+1))*(1+1+1))*(((1+1+1)*(1+1))*(1+1+1)))*(((((1+1)*(1+1)+1)*(1+1)+1)*(1+1))*(((1+1+1)*(1+1))*(1+1+1)))+1)*((((1+1)*(1+1))*(1+1))*((1+1+1)*(1+1)))+1)*((((1+1+1)*(1+1))*(1+1+1))*((1+1+1)*(1+1+1))+1))*((1+1)*(1+1)+1))*(1+1+1))*(1+1+1))*(1+1+1))*(1+1+1))*(1+1))*(1+1))*(1+1))*(1+1))*(1+1))*(1+1))+1)*((1+1)*(1+1)+1))*(1+1+1))*(1+1+1))*(1+1+1))*(1+1+1))*(1+1+1))*(1+1))+1)*((((((1+1+1)*(1+1))*(1+1+1))*(((1+1)*(1+1)+1)*(1+1+1))+1)*(1+1))*((((1+1+1)*(1+1+1))*(1+1+1))*(((1+1)*(1+1)+1)*(1+1+1)))+1))*((1+1)*(1+1)+1))*(1+1+1))*(1+1+1))*(1+1+1))*(1+1+1))*(1+1+1))*(1+1+1))*(1+1))+1)

1361788799550131972374553991985921 227:
(((((((((((((((((((((((((((((((1+1+1)*(1+1+1))*((1+1+1)*(1+1+1))+1)*(((1+1+1)*(1+1+1))*((1+1+1)*(1+1+1)))+1)*(((1+1+1)*(1+1))*(1+1+1))+1)*(((((1+1+1)*(1+1))*(1+1+1))*(((1+1)*(1+1))*((1+1)*(1+1))))*((((1+1+1)*(1+1+1))*(1+1+1))*((1+1+1)*(1+1+1))+1)+1))*(1+1+1))*(1+1))*(1+1))+1)*(1+1+1))*(1+1+1))*(1+1+1))*(1+1))+1)*(((((1+1+1)*(1+1))*(1+1+1))*(((1+1+1)*(1+1))*(1+1+1))+1)*(((1+1+1)*(1+1))*(1+1+1))+1))*(((((((((1+1+1)*(1+1))*(1+1+1))*((1+1+1)*(1+1+1)))*((((1+1+1)*(1+1))*(1+1+1))*((1+1+1)*(1+1+1)))+1)*(((1+1+1)*(1+1+1))*((1+1+1)*(1+1)))+1)*(((((1+1)*(1+1)+1)*(1+1+1))*((1+1+1)*(1+1+1)))*(((1+1+1)*(1+1+1))*((1+1+1)*(1+1+1)))+1))*(1+1+1))+1))*(((1+1+1)*(1+1))*(1+1+1)+1))*((1+1+1)*(1+1)+1))*(1+1+1))*(1+1+1))*(1+1+1))*(1+1))*(1+1))*(1+1))*(1+1))*(1+1))*(1+1))*(1+1))*(1+1))+1)

$379^6=2963706958323721$ 107:

\[((((((((((((((((((1+1+1)\cdot (1+1+1))\cdot ((1+1+1)\cdot (1+1+1))+1)\cdot ((1+1+1)\cdot (1+1))+1)\cdot ((1+1+1)\cdot (1+1+1)))\cdot (((((1+1)\cdot (1+1)+1)\cdot (1+1))\cdot ((1+1+1)\cdot (1+1))+1)\cdot (((1+1+1)\cdot (1+1+1))\cdot (1+1+1))))\cdot (1+1)+1)\cdot (1+1+1))\cdot (1+1+1))\cdot (1+1+1))\cdot (1+1)+1)\cdot ((((1+1+1)\cdot (1+1+1))\cdot (1+1+1))\cdot ((1+1+1)\cdot (1+1+1))+1))\cdot (((1+1+1)\cdot (1+1))\cdot (1+1+1)+1))\cdot ((1+1)\cdot (1+1)+1))\cdot (1+1+1))\cdot (1+1+1))\cdot (1+1+1))\cdot (1+1+1))\cdot (1+1)+1\]

$541^6=25071688922457241$ 113:
\[((((((((((((((((((((((((((1+1+1)\cdot (1+1+1))\cdot (1+1+1))\cdot (((1+1+1)\cdot (1+1))\cdot (1+1+1))+1)\cdot (((1+1+1)\cdot (1+1)+1)\cdot ((1+1+1)\cdot (1+1)))+1)\cdot ((((1+1+1)\cdot (1+1))\cdot ((1+1+1)\cdot (1+1)))\cdot (((1+1+1)\cdot (1+1+1))\cdot (1+1+1))+1))\cdot (1+1+1))\cdot (1+1+1))\cdot (1+1))\cdot (1+1))\cdot (1+1))\cdot (1+1))\cdot (1+1)+1)\cdot ((1+1)\cdot (1+1)+1))\cdot ((1+1)\cdot (1+1)+1))\cdot (1+1+1))\cdot (1+1+1))\cdot (1+1+1))\cdot (1+1)+1)\cdot ((1+1)\cdot (1+1)+1))\cdot (1+1+1))\cdot (1+1+1))\cdot (1+1+1))\cdot (1+1+1))\cdot (1+1))\cdot (1+1))\cdot (1+1)+1\]

$733^6=155104303499468569$ 119:
\[(((((((((((((((((((1+1+1)\cdot (1+1+1))\cdot ((1+1+1)\cdot (1+1)))\cdot (((1+1+1)\cdot (1+1+1))\cdot (1+1+1)))\cdot ((((1+1+1)\cdot (1+1+1))\cdot ((1+1+1)\cdot (1+1))+1)\cdot (((1+1+1)\cdot (1+1)+1)\cdot (1+1+1)))+1)\cdot (((1+1)\cdot (1+1))\cdot (1+1+1)+1))\cdot ((1+1+1)\cdot (1+1)+1))\cdot (1+1+1))\cdot (1+1+1))\cdot (1+1))\cdot (1+1))\cdot (1+1))\cdot (1+1))\cdot (1+1)+1)\cdot (((((1+1+1)\cdot (1+1))\cdot (1+1+1))\cdot ((1+1+1)\cdot (1+1+1))+1)\cdot ((1+1+1)\cdot (1+1+1))+1))\cdot (((1+1+1)\cdot (1+1))\cdot (1+1+1)+1))\cdot ((1+1+1)\cdot (1+1)+1))\cdot (1+1+1))\cdot (1+1+1))\cdot (1+1)+1\]

$739^6=162879576091729561$ 119:
\[((((((((((((((((((1+1+1)\cdot (1+1+1))\cdot (1+1+1))\cdot (((1+1+1)\cdot (1+1))\cdot (1+1+1)))\cdot (((((1+1)\cdot (1+1))\cdot (1+1+1))\cdot ((1+1+1)\cdot (1+1+1))+1)\cdot ((1+1)\cdot (1+1)))+1)\cdot ((((1+1+1)\cdot (1+1+1))\cdot (((1+1)\cdot (1+1))\cdot (1+1)))\cdot (((1+1+1)\cdot (1+1+1))\cdot (((1+1)\cdot (1+1))\cdot (1+1)))+1))\cdot (1+1+1))\cdot (1+1+1))\cdot (1+1)+1)\cdot (1+1+1))\cdot (1+1)+1)\cdot (((1+1+1)\cdot (1+1+1))\cdot ((1+1+1)\cdot (1+1+1))+1))\cdot (((1+1+1)\cdot (1+1))\cdot ((1+1)\cdot (1+1)+1)+1))\cdot ((1+1)\cdot (1+1)+1))\cdot (1+1+1))\cdot (1+1+1))\cdot (1+1+1))\cdot (1+1))\cdot (1+1)+1\]

$811^9=151770612880318395249730891$ 179:
\[(((((((((((((((((((((((((((((((((((((1+1+1)\cdot(1+1+1))\cdot(1+1+1))\cdot(((1+1+1)\cdot(1+1))\cdot(1+1+1)))\cdot(((((1+1)\cdot(1+1)+1)\cdot(1+1)+1)\cdot(1+1))\cdot(((1+1+1)\cdot(1+1))\cdot(1+1+1)))+1)\cdot((((1+1)\cdot(1+1))\cdot(1+1))\cdot((1+1+1)\cdot(1+1)))+1)\cdot((((1+1+1)\cdot(1+1))\cdot(1+1+1))\cdot((1+1+1)\cdot(1+1+1))+1))\cdot((1+1)\cdot(1+1)+1))\cdot(1+1+1))\cdot(1+1+1))\cdot(1+1+1))\cdot(1+1+1))\cdot(1+1))\cdot(1+1))\cdot(1+1))\cdot(1+1))\cdot(1+1))\cdot(1+1))+1)\cdot((1+1)\cdot(1+1)+1))\cdot(1+1+1))\cdot(1+1+1))\cdot(1+1+1))\cdot(1+1+1))\cdot(1+1+1))\cdot(1+1))+1)\cdot((((((1+1+1)\cdot(1+1))\cdot(1+1+1))\cdot(((1+1)\cdot(1+1)+1)\cdot(1+1+1))+1)\cdot(1+1))\cdot((((1+1+1)\cdot(1+1+1))\cdot(1+1+1))\cdot(((1+1)\cdot(1+1)+1)\cdot(1+1+1)))+1))\cdot((1+1)\cdot(1+1)+1))\cdot(1+1+1))\cdot(1+1+1))\cdot(1+1+1))\cdot(1+1+1))\cdot(1+1+1))\cdot(1+1+1))\cdot(1+1))+1)\]

$577^{12}=1361788799550131972374553991985921$ 227:
\[(((((((((((((((((((((((((((((((1+1+1)\cdot (1+1+1))\cdot ((1+1+1)\cdot (1+1+1))+1)\cdot (((1+1+1)\cdot (1+1+1))\cdot ((1+1+1)\cdot (1+1+1)))+1)\cdot (((1+1+1)\cdot (1+1))\cdot (1+1+1))+1)\cdot (((((1+1+1)\cdot (1+1))\cdot (1+1+1))\cdot (((1+1)\cdot (1+1))\cdot ((1+1)\cdot (1+1))))\cdot ((((1+1+1)\cdot (1+1+1))\cdot (1+1+1))\cdot ((1+1+1)\cdot (1+1+1))+1)+1))\cdot (1+1+1))\cdot (1+1))\cdot (1+1))+1)\cdot (1+1+1))\cdot (1+1+1))\cdot (1+1+1))\cdot (1+1))+1)\cdot (((((1+1+1)\cdot (1+1))\cdot (1+1+1))\cdot (((1+1+1)\cdot (1+1))\cdot (1+1+1))+1)\cdot (((1+1+1)\cdot (1+1))\cdot (1+1+1))+1))\cdot (((((((((1+1+1)\cdot (1+1))\cdot (1+1+1))\cdot ((1+1+1)\cdot (1+1+1)))\cdot ((((1+1+1)\cdot (1+1))\cdot (1+1+1))\cdot ((1+1+1)\cdot (1+1+1)))+1)\cdot (((1+1+1)\cdot (1+1+1))\cdot ((1+1+1)\cdot (1+1)))+1)\cdot (((((1+1)\cdot (1+1)+1)\cdot (1+1+1))\cdot ((1+1+1)\cdot (1+1+1)))\cdot (((1+1+1)\cdot (1+1+1))\cdot ((1+1+1)\cdot (1+1+1)))+1))\cdot (1+1+1))+1))\cdot (((1+1+1)\cdot (1+1))\cdot (1+1+1)+1))\cdot ((1+1+1)\cdot (1+1)+1))\cdot (1+1+1))\cdot (1+1+1))\cdot (1+1+1))\cdot (1+1))\cdot (1+1))\cdot (1+1))\cdot (1+1))\cdot (1+1))\cdot (1+1))\cdot (1+1))\cdot (1+1))+1)\]

}

\end{document}